\newtheorem{theorem}{Theorem}
\newtheorem{lemma}{Lemma}
\newtheorem{corollary}{Corollary}
\newtheorem{example}{Example}
\newtheorem{proposition}{Proposition}
\newtheorem{definition}{Definition}
\newtheorem{remark}{Remark}
\preto\tabular{\setcounter{magicrownumbers}{0}}
\newcounter{magicrownumbers}
\begin{document}

\title{Rate $(n-1)/n$ Systematic\\ 
MDS Convolutional Codes over $GF(2^m)$}

%


\author{\IEEEauthorblockN{\'{A}ngela Barbero}
\IEEEauthorblockA{
\\Universidad de Valladolid\\47011 Valladolid, Spain\\
Email: angbar@wmatem.eis.uva.es\\}
\and
\IEEEauthorblockN{{\O}yvind Ytrehus}
\IEEEauthorblockA{\\Simula@UiB and University of Bergen\\N-5020
Bergen, Norway\\
Email: oyvindy@simula.no}}

\maketitle

\begin{abstract}
A systematic convolutional encoder of rate $(n-1)/n$ and maximum degree $D$ generates a code of free distance at most ${\cal D} = D+2$ and, at best, a column distance profile (CDP) of $[2,3,\ldots,{\cal D}]$. A code is \emph{Maximum Distance Separable} (MDS) if it possesses this CDP. Applied on a communication channel over which packets are transmitted sequentially and which loses (erases) packets randomly, such a code allows the recovery from any pattern of $j$ erasures in the first $j$ $n$-packet blocks for $j<{\cal D}$, with a delay of at most $j$ blocks counting from the first erasure. This paper addresses the problem of finding the largest ${\cal D}$ for which a systematic rate $(n-1)/n$ code over $GF(2^m)$ exists, for given $n$ and $m$. In particular, constructions for rates $(2^m-1)/2^m$ and $(2^{m-1}-1)/2^{m-1}$ are presented which provide optimum values of ${\cal D}$ equal to 3 and 4, respectively. A search algorithm is also developed, which produces new codes for ${\cal D}$ for field sizes $2^m \leq 2^{14}$. Using a complete search version of the algorithm, the maximum value of ${\cal D}$, and codes that achieve it, are determined for all code rates $\geq 1/2$ and every field size $GF(2^m)$ for $m\leq 5$ (and for some rates for $m=6$).
\end{abstract}




\section{Introduction}

\footnote{This work is supported by Ministerio de Economía, Industria y Competitividad, Gobierno de España, through project MTM2013-46949-P, the Estonian Research council through project EMP133, the Norwegian Research Council through the SARDS project. } In many practical communication applications, such as multimedia transmission over  packet erasure channels, on-time delivery is an important quality-of-service criterion. Traditional ARQ systems, for example the one used by TCP for transport layer unicast service, suffer from long delays due to erasures when the round-trip time is large. This has led to an increased interest 
in the design and analysis of systems based on packet-level error correcting codes. Such coded schemes are also known to be beneficial in other transport layer models, for example in the multi-path case.

Two main approaches to this coding problem have been discussed in the literature.
The deterministic approach \cite{Gabidulin1988,StronglyMDS2006,Superregular2013} is to send packets using a fixed $2^m$-ary convolutional code with a good column distance profile. This approach is discussed in Subsection~\ref{subsec:Deterministic}.
Random coding was proposed as a solution in \cite{Tetrys2011,CTCP-2011}. In these schemes, the sender transmits $k$ uncoded information packets, followed by $n-k$ parity check packets formed by random linear combinations of all information packets that have not been acknowledged by the receiver so far. Subsection~\ref{subsec:random} describes this approach, and also discusses a  hybrid approach that combines deterministic and random coding.

\subsection{Contributions}
We present new codes in Section~\ref{sec:newcodes}. In Section~\ref{sec:dist34} we present two new, general, and optimum constructions of MDS convolutional codes. In the literature, there exist only a few general constructions of high-rate convolutional codes: As far as we know, only the Wyner-Ash code \cite{WynerAsh1963} and their binary generalizations (\cite{Ytrehus95}, and Thms. 7.10 and 7.13 in \cite{McElieceHandbook}). We present a simple (but as far as we can see, not previously described in the literature) distance-3 construction. This code has the same rate and Viterbi complexity as the binary Wyner-Ash code, but has a better column distance profile. We also present a much more interesting algebraic distance-4 construction in Proposition~\ref{thm:d4}. In Section~\ref{sec:search} we describe a search algorithm and in Section~\ref{sec:searchresults} we present the codes found by the algorithm. For most parameters, these codes  are better (in a sense which will be made more precise) than previously known codes. Further, we present simple upper bounds in Section~\ref{sec:bounds}.

By convention we will call a convolutional code \emph{systematic} if is it has a systematic encoder; \emph{i.e.} one that preserves all information symbols and obtains redundancy by extra parity symbols.  $2^m$-ary systematic rate $(n-1)/n$ convolutional encoders are useful in order to obtain fast recovery of packet erasures in the common case of channels with moderate erasure rates, and we will focus only on this class of codes.







\section{Background}

\subsection{Notation}

For a thorough introduction to convolutional codes, please see \cite{LinCostello2004}. In the following we will describe the concept of a $2^m$-ary MDS convolutional code in a way which is convenient for our purposes in this paper.

Let $  m \geq 1, n \geq 2, k=n-1$ be integers, $\mathbb{F}=GF(2^m)$, and define the matrices and vectors
\[R_0 = (r_{0,1},\ldots,r_{0,k}) \in \mathbb{F}^k\]
 where $\mathbb{F}^k$ is the $k$-dimensional space of \emph{row} vectors over $\mathbb{F}$,
\[H_0=(R_0|1)\in \mathbb{F}^{n},\]
where $\mathbb{F}^{r \times c}$ denotes the space of matrices with $r$ rows and $c$ columns over $\mathbb{F}$. For $i>1$ define
\[R_i= (r_{i,1},\ldots,r_{i,k}|0) \in \mathbb{F}^k, H_i = \begin{pmatrix}
H_{i-1}\\
R_i\\
\end{pmatrix}\in \mathbb{F}^{(i+1) \times n}\]
and, for an integer $L \geq 2,$ let
\begin{equation}\label{eq:parch}
 H^{(L)} = (H_{L},\begin{pmatrix}
0_{1 \times n}\\
H_{L-1}\\
\end{pmatrix},\ldots,\begin{pmatrix}
0_{(L-1) \times n}\\
H_{0}\\
\end{pmatrix}) \in \mathbb{F}^{(L+1) \times n(L+1)},
\end{equation}
where $0_{r \times c}$ is all-zero matrix with $r$ rows and $c$ columns.
Then $H^{(L)}$ is the parity check matrix for the $L$th truncated block code ${\cal C}^{(L)}$ of a (systematic) convolutional code ${\cal C}$, thus any vector of length $(l+1)n$ for $l \leq L$,
\[[v]_l = (v_1^{(0)},\ldots,v_n^{(0)},v_1^{(1)},\ldots,v_n^{(1)}, \ldots,v_1^{(l)},\ldots,v_n^{(l)}) \in \mathbb{F}^{(l+1)n}\]
is a codeword in ${\cal C}^{(l)}$ if and only if the syndrome
\[H^{(l)} [v]_l^\top  = (0,\ldots,0)^\top \in \mathbb{F}^{(l+1)\times 1}. \]

A systematic encoder for the  code ${\cal C}^{(L)}$ is represented by
\begin{equation}\label{eq:systenc}
G^{(L)} = \begin{pmatrix}
G_0&G_1&\cdots&G_L\\
& G_0 & \cdots & G_{L-1}\\
&     & \ddots & \vdots\\
&     &        & G_0
 \end{pmatrix}\in \mathbb{F}^{k(L+1) \times n(L+1)}
\end{equation}
where
\[G_0 = (I_k | R_0^\top) \in \mathbb{F}^{k \times n}, G_i =(0_k|R_i^\top)\in  \mathbb{F}^{k \times n} \mbox{ for $i>0$},\] and $I_k$ and $0_k$ are the $k \times k$ identity and zero matrices, respectively.  It is straightforward to verify that $G^{(L)}  \times H^{(L)^\top} = 0_{k(L+1) \times (L+1)}$.

\begin{example}
\label{ex-rate23}
Let $\mathbb{F}=GF(2^3)$ with primitive element $\alpha$ defined by $\alpha^3+\alpha+1=0$. Then the parity and generator matrices
\[ H^{(2)} =  \begin{pmatrix}
1 & 1 & 1 & 0 & 0 & 0 & 0 & 0 & 0\\
1 & \alpha & 0 & 1 & 1 & 1 & 0 & 0 & 0\\
\alpha^3 & 1 & 0 & 1 & \alpha & 0 & 1 & 1 & 1  \
 \end{pmatrix} \]
 and
\[ G^{(2)} =  \begin{pmatrix}
1 & 0 & 1 & 0 & 0 & 1 &  0 & 0 & \alpha^3  \\
0 & 1 & 1 & 0 & 0 & \alpha & 0 & 0 & 1  \\
0 & 0 & 0 & 1 & 0 & 1 & 0 & 0 & 1  \\
0 & 0 & 0 & 0 & 1 & 1 & 0 & 0 & \alpha   \\
0 & 0 & 0 & 0 & 0 & 0 & 1 & 0 & 1    \\
0 & 0 & 0 & 0 & 0 & 0 & 0 & 1 & 1

 \end{pmatrix} \]
 define a truncated  code ${\cal C}^{(2)}$, which is rate 6/9 block code over $\mathbb{F}$. Note that the matrices are completely determined by the parity check coefficients $r_{i,j},\; i=0,\ldots,L,\; j=1,\ldots,k$. 
\end{example}

In the conventional polynomial notation of convolutional codes \cite{LinCostello2004}, the parity check matrix can be described as
\[ H(x) = (\sum_{i=0}^{D}r_{i,1}x^i,\ldots,\sum_{i=0}^{D}r_{i,k}x^i,1)\in \mathbb{F}[x].\]
In Example~\ref{ex-rate23}, $H(x)=( 1+x+\alpha^3x^2,1+\alpha x + x^2, 1)$. Similarly, the corresponding polynomial generator matrix is
\[
G(x) =  \begin{pmatrix}
1 & 0 & 1 + x + \alpha^3x^2 \\
0 & 1 & 1 + \alpha x + x^2
\end{pmatrix}
\]

\subsection{MDS convolutional codes constructed from superregular matrices}
\label{subsec:Deterministic}

In the deterministic approach \cite{Gabidulin1988,StronglyMDS2006,Superregular2013}, the goal is to design codes with an optimum \emph{column distance profile,} which we will define below.

The $l$-th column distance $d_l=d_l({\cal C})$ of a convolutional code $\cal C$ is the minimum Hamming weight of any truncated codeword $[c]_l$ \emph{with the first block $(c_1^{(0)},\ldots,c_n^{(0)})$ nonzero}, and the \emph{column distance profile} (CDP) is the non-decreasing sequence $(d_0,d_1,d_2,\ldots,d_D = {\cal D}, {\cal D}, {\cal D}, \ldots)$, where ${\cal D}$ is the free distance of the code and $D$ is the index for which the CDP reaches ${\cal D}$. The CDP was originally studied for its significance on the performance of sequential decoding (please see Ch. 13 of \cite{LinCostello2004}.) Recently the CDP has received renewed attention in the context of $2^m$-ary codes, due to its importance for fast recovery from losses of symbols in an erasure channel.

Recall that we consider only convolutional codes of rate $k/n = (n-1)/n$ that have a systematic encoder. In this case, by the Singleton bound for truncated block codes, $d_0 \leq 2$, and by similar linear algebra arguments, $d_l \leq d_{l-1}+1$ for $l>0$. Moreover, $d_l = d_{l-1}$ for $l>D$. So the best column distance profile one can hope to find in a code with a systematic encoder  is
\begin{equation}
\label{eq_MDS}
 d_0 = 2, d_1 = 3, \ldots, d_j = j+2, \ldots, d_D = D+2 = {\cal D}.
\end{equation}

By an \emph{MDS convolutional code},  in this paper we will mean a code with a CDP as in (\ref{eq_MDS}).



\begin{remark}
The concept of Strongly-MDS codes was introduced in \cite{StronglyMDS2006}. This concept takes into account that for some codes that do not possess a systematic encoder, the free distance may grow beyond $\delta+2$, where $\delta$ is the memory of a \emph{minimal} encoder. In order not to complicate the notation, and since Viterbi complexity is not an issue in this paper, we omit the details.
\end{remark}

\begin{definition}
\label{def:superregular}
Consider a lower triangular matrix

\[SR=\left(\begin{array}{ccccc}
r_0 & 0 & 0 & \cdots & 0 \\
r_1 & r_0 & 0 & \cdots & 0 \\
r_2 & r_1 & r_0 & \cdots & 0 \\
\vdots & \vdots & \vdots & \ddots & \vdots \\
r_L & r_{L-1} & r_{L-2} & \cdots & r_0
\end{array} \right)\]
where each element $r_i\in {\mathbb F}$.

Consider a square submatrix $P$ of size $p$   of $SR$, formed by the entries of $SR$ in the rows with indices $1\leq i_1 < i_2<\cdots <i_p\leq (L+1)$ and columns of indices $1\leq j_1 < \cdots < j_p \leq (L+1)$. $P$, and its corresponding minor, are proper if $j_l\leq i_l$ for all $l\in\{ 1, \ldots, p\}$.

$SR$ is superregular if all its proper $p\times p$ minors are non singular for any $p\leq L+1$.

When matrix $SR$ is upper triangular the definition of proper submatrices is analogous.


%
%
\end{definition}



A $\gamma \times \gamma$ superregular matrix can be used to construct a rate $1/2$ code in two ways: (1) \cite{Gabidulin1988} a systematic MDS convolutional code with  CDP as in (\ref{eq_MDS}) with ${\cal D}=D+2 = \gamma+1$, (2) \cite{StronglyMDS2006} when $\gamma = 2\delta+1$,  a strongly-MDS code (in general nonsystematic) with a parity check matrix of max degree $\delta$ and the same CDP as for the systematic codes in case (1).


While superregular matrices are known to exist for all dimensions if the field is large enough, general efficient constructions are not known, and for $\gamma \gtrsim 10$ the minimum field size for which a $\gamma \times \gamma$ superregular matrix exists is not known.  Another problem with the deterministic approach is that the existing design methods do not allow a simple construction of codes of high rate and/or high degree. Codes of \emph{higher} rates (which are desirable in many practical cases) can also be constructed from these superregular matrices, but this involves deleting columns, so that the conditions on a superregular matrix are too strict. This means that in practice only  simple 
codes can be constructed in this way. Since superregular matrices are so hard to construct, the reduction to the superregular matrix problem blocks the code construction. Therefore we generalize Definition~\ref{def:superregular} as follows:

\begin{definition}
\label{def:ksuperregular}
Consider an $s$\emph{-lower triangular} matrix (where $s$ is a positive integer)

\begin{equation}\label{eq:s-superregular}
SSR = \left(
  \begin{array}{ccccccccccccc}

  r_{0,1} & \cdots & r_{0,s} & 0 & \cdots & 0 &             0 & \cdots & 0 & \cdots  & 0 & \cdots & 0 \\
  r_{1,1} & \cdots & r_{1,s}  & r_{0,1} & \cdots & r_{0,s} & 0 & \cdots & 0 & \cdots & 0 & \cdots & 0 \\
  r_{2,1} & \cdots & r_{2,s}  & r_{1,1} & \cdots & r_{1,s} & r_{0,1} & \cdots & r_{0,s} & \cdots& 0 & \cdots & 0 \\
  \vdots & \ddots & \vdots & \vdots & \ddots & \vdots & \vdots & \ddots & \vdots & \cdots & \vdots & \ddots & \vdots \\
  r_{L-1,1} & \cdots & r_{L-1,s} & r_{L-2,1} & \cdots & r_{L-2,s}& r_{L-3,1}& \cdots & r_{L-3,s} & \cdots & 0 & \cdots & 0 \\
  r_{L,1} & \cdots & r_{L,s} & r_{L-1,1} & \cdots & r_{L-1,s}& r_{L-2,1}& \cdots & r_{L-2,s} & \cdots & r_{0,1} & \cdots & r_{0,s} \\

    \end{array}
\right)
\end{equation}

Consider a square submatrix $P$ of size $p$   of $SSR$, formed by the entries of $SSR$ in the rows with indices $1\leq i_1 < i_2<\cdots <i_p\leq (L+1)$ and columns of indices $1\leq j_1 < \cdots < j_p \leq s(L+1)$. $P$, and its corresponding minor, are proper if
$j_l\leq s\cdot i_l$ for all $l\in\{ 1, \ldots, p\}$.

The matrix $SSR$  is called $s$-\emph{superregular} iff all of its proper $p \times p$ minors, for any $p\leq L+1$, are nonsingular.

\end{definition}

The following lemma is a restatement of Theorem~1 in \cite{Gabidulin1988}, using the terminology of this section.
\begin{lemma}
\label{lemma:biglemma}
Let $H^{(D)}$ be the parity check matrix of the $D-th$ truncation of a systematic convolutional code, given by

\begin{equation}\label{}
H^{(D)} = \left(
  \begin{array}{ccccccccccccccccc}

  r_{0,1} & \cdots & r_{0,k} & 1 & 0 & \cdots & 0 &  0 &            0 & \cdots & 0 & 0 & \cdots  & 0 & \cdots & 0 &0 \\
  r_{1,1} & \cdots & r_{1,k} & 0  & r_{0,1} & \cdots & r_{0,k} & 1 &  0 & \cdots & 0 & 0 & \cdots & 0 & \cdots & 0 &0 \\
  r_{2,1} & \cdots & r_{2,k} & 0 & r_{1,1} & \cdots & r_{1,k} & 0 & r_{0,1} & \cdots & r_{0,k} & 1 & \cdots& 0 & \cdots & 0 & 0 \\
  \vdots & \ddots & \vdots &  \vdots & \vdots & \ddots & \vdots & \vdots & \vdots & \ddots & \vdots & \vdots & \cdots & \vdots & \ddots & \vdots & \vdots \\
  r_{D-1,1} & \cdots & r_{D-1,k} & 0 &  r_{D-2,1} & \cdots & r_{D-2,k}& 0 &  r_{D-3,1}& \cdots & r_{D-3,k} & 0 & \cdots & 0 & \cdots & 0 & 0 \\
  r_{D,1} & \cdots & r_{D,k} & 0&  r_{D-1,1} & \cdots & r_{D-1,k}& 0 & r_{D-2,1}& \cdots & r_{D-2,k} &0  & \cdots & r_{0,1} & \cdots & r_{0,k} & 1 \\

    \end{array}
\right)
\end{equation}

and let $H'^{(D)}$ be the matrix obtained from $H^{(D)}$ by removing the columns in positions $(k+1), 2(k+1), 3(k+1), \ldots , (D+1)(k+1)$, that is

\begin{equation}\label{eq:ksuperregular}
H'^{(D)} = \left(
  \begin{array}{ccccccccccccc}

  r_{0,1} & \cdots & r_{0,k} & 0 & \cdots & 0 &             0 & \cdots & 0 & \cdots  & 0 & \cdots & 0 \\
  r_{1,1} & \cdots & r_{1,k}  & r_{0,1} & \cdots & r_{0,k} & 0 & \cdots & 0 & \cdots & 0 & \cdots & 0 \\
  r_{2,1} & \cdots & r_{2,k}  & r_{1,1} & \cdots & r_{1,k} & r_{0,1} & \cdots & r_{0,k} & \cdots& 0 & \cdots & 0 \\
  \vdots & \ddots & \vdots & \vdots & \ddots & \vdots & \vdots & \ddots & \vdots & \cdots & \vdots & \ddots & \vdots \\
  r_{D-1,1} & \cdots & r_{D-1,k} & r_{D-2,1} & \cdots & r_{D-2,k}& r_{D-3,1}& \cdots & r_{D-3,k} & \cdots & 0 & \cdots & 0 \\
  r_{D,1} & \cdots & r_{D,k} & r_{D-1,1} & \cdots & r_{D-1,k}& r_{D-2,1}& \cdots & r_{D-2,k} & \cdots & r_{0,1} & \cdots & r_{0,k} \\

    \end{array}
\right)
\end{equation}

Then the CDP of the convolutional code given by $H^{(D)}$ is $(2,3,\ldots , D+2)$ if and only if $H'^{(D)}$ is a $k$-superregular matrix.

\end{lemma}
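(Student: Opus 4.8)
The first step is to reduce the biconditional to a statement about a single column distance. For a systematic rate $(n-1)/n$ encoder the excerpt records $d_0\le 2$, $d_l\le d_{l-1}+1$ for $l>0$, and $d_l=d_{l-1}$ for $l>D$, and that the free distance is at most $D+2$; together these show that the CDP $(d_0,d_1,\dots)$ is entrywise at most $(2,3,\dots,D+2,D+2,\dots)$, with equality if and only if $d_D\ge D+2$, and (iterating $d_l\ge d_{l+1}-1$, which is just $d_{l+1}\le d_l+1$ rearranged) the latter is equivalent to $d_l\ge l+2$ for every $l\le D$. So it suffices to prove that $d_l\ge l+2$ for all $l\le D$ if and only if every proper minor of $H'^{(D)}$ is nonsingular. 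I will use three elementary facts. (a)~If $[c]_l$ is a truncated codeword with information vector $u$ and parity vector $p$, then $p^{\top}=H'^{(l)}u^{\top}$ over $GF(2^m)$ (the columns deleted from $H^{(l)}$ to form $H'^{(l)}$ carried the $1$'s of the systematic part), and $\mathrm{wt}([c]_l)=\mathrm{wt}(u)+\mathrm{wt}(p)$. (b)~$H'^{(l)}$ is the top--left submatrix of $H'^{(D)}$, so a proper minor of the former is a proper minor of the latter. (c)~$H'^{(D)}$ has the block--banded Toeplitz shape of~(\ref{eq:ksuperregular}): decreasing every row index by $\delta$ and every column index by $k\delta$ carries a submatrix onto an identical one, and preserves properness.

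For the direction ``good CDP $\Rightarrow$ $H'^{(D)}$ is $k$-superregular'' I argue by contraposition. Suppose $H'^{(D)}$ has a singular proper minor, and fix one, call it $P$, of least size $p$, on rows $t_1<\dots<t_p$ and columns $c_1<\dots<c_p$. Using the Toeplitz shift of~(c) with $\delta$ the block index of $c_1$ (properness keeps all shifted indices in range) I may assume $c_1$ lies in the first information block. Minimality of $p$ then forces $P$ to have rank $p-1$ and its right kernel vector $v$, unique up to a scalar, to have full support: deleting the $i$-th row together with the $i$-th column of $P$ always yields a proper minor of $H'^{(D)}$, so were $\mathrm{rank}\,P<p-1$, or were some coordinate of $v$ zero, one of those smaller proper minors would be singular. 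Now place the coordinates of $v$ into the information positions $c_1,\dots,c_p$ and zeros elsewhere; this defines an information sequence $u$ supported within the first $t_p+1\le D+1$ blocks, with nonzero first block (since $c_1$ is in block $0$ and $v_1\ne 0$), and, by the kernel relation $Pv^{\top}=0$, the induced parity symbols vanish at blocks $t_1,\dots,t_p$. Hence the truncated codeword $[c]_{t_p}$ has weight at most $p+(t_p+1-p)=t_p+1$, so $d_{t_p}\le t_p+1$ with $t_p\le D$, and the CDP is not $(2,3,\dots,D+2)$.

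For the converse, again by contraposition, assume $d_l\le l+1$ for some $l\le D$ and let $l^{\ast}$ be the least such index, so $d_l\ge l+2$ whenever $l<l^{\ast}$. Choose a codeword $[c]_{l^{\ast}}$ of weight exactly $l^{\ast}+1$ with nonzero first block; let $c_1<\dots<c_a$ be its information support (so $c_1$ is in block $0$) and let $t_1<\dots<t_a$ be the block indices $\le l^{\ast}$ at which the parity symbol equals $0$ --- there are exactly $a$ of them, because $\mathrm{wt}(u)+\mathrm{wt}(p)=l^{\ast}+1$. The $a\times a$ submatrix $P$ of $H'^{(l^{\ast})}$ on rows $t_1<\dots<t_a$ and columns $c_1<\dots<c_a$ is singular, since the information symbols of $[c]_{l^{\ast}}$ restricted to their support form a right kernel vector of $P$ (by fact~(a)). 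It remains to check that $P$ is \emph{proper}, i.e.\ that the block index of $c_i$ is at most $t_i$ for each $i$. If instead the block index of some $c_i$ exceeded $t_i$, then truncating $[c]_{l^{\ast}}$ to its first $t_i+1$ blocks would give a codeword of $\mathcal{C}^{(t_i)}$ with nonzero first block and weight at most $(i-1)+(t_i+1-i)=t_i$ --- at most $i-1$ information symbols can lie in blocks $0,\dots,t_i$, while exactly $i$ of the parities $p_0,\dots,p_{t_i}$ vanish --- contradicting $d_{t_i}\ge t_i+2$; that bound is available because $t_i$ is strictly less than the block index of $c_i$, which is at most $l^{\ast}$, so $t_i<l^{\ast}$. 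Thus $P$ is a singular proper minor of $H'^{(l^{\ast})}$, hence of $H'^{(D)}$.

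The only genuinely delicate point, and the one I expect to cost the most care, is the properness bookkeeping. In each direction a singular \emph{square} submatrix appears at once, and the whole effort goes into arranging that its row and column index sets satisfy $j_l\le k\,i_l$: this is handled by the Toeplitz shift together with the minimal-size / full-kernel argument in the first direction, and by choosing the earliest ``bad'' truncation level $l^{\ast}$ in the second.
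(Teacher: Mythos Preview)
Your argument is correct and takes a genuinely different route from the paper's proof in Appendix~A.

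The paper proceeds structurally. For the direction ``CDP good $\Rightarrow$ $k$-superregular'', it embeds an arbitrary proper minor $M'$ of $H'^{(D)}$ into a full $(D{+}1)\times(D{+}1)$ minor $M$ of $H^{(D)}$ by adjoining the appropriate identity columns (so $\det M=\pm\det M'$), and then shows $M\ne 0$ by a recursive block-triangular decomposition: one locates the first index $s$ at which the staircase of column-block counts resets, splits $M$ into an $(s{+}1)\times(s{+}1)$ block sitting inside $H^{(s)}$ and a complementary block inside $H^{(D-s-1)}$, and recurses. The converse similarly strips identity columns from a staircase minor of $H^{(D)}$ and verifies that what remains is proper in $H'^{(D)}$. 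All of the work is index bookkeeping on column blocks $C_s$.

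You instead argue operationally through codewords. A minimal singular proper minor has a full-support right kernel vector (because deleting the $i$-th row together with the $i$-th column preserves properness, so every such $(p{-}1)\times(p{-}1)$ minor is nonsingular by minimality); you read that vector as an information sequence and obtain a truncated codeword of deficient weight. Conversely, a minimum-weight codeword at the first failing level $l^{\ast}$ gives a singular square submatrix whose properness you certify by truncating and invoking $d_{t_i}\ge t_i+2$ for $t_i<l^{\ast}$. The Toeplitz shift and the two minimality choices replace the paper's recursive descent entirely.

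What each buys: the paper's approach stays inside the language of minors and never needs the encoder interpretation, at the cost of heavy index tracking; yours is shorter and makes the link between proper minors and column distances transparent. One small point worth a sentence in your write-up: you take a codeword of weight \emph{exactly} $l^{\ast}+1$, which uses that column distances are nondecreasing ($d_{l^{\ast}}\ge d_{l^{\ast}-1}\ge l^{\ast}+1$ by minimality of $l^{\ast}$), hence $d_{l^{\ast}}=l^{\ast}+1$. Alternatively, if you allow weight $\le l^{\ast}+1$ there may be more zero-parity indices than information-support indices; picking the $a$ smallest zero-parity indices still works with the same properness argument.
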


Theorem~1 in \cite{Gabidulin1988} is stated without proof. For reference, we include a formal proof in Appendix~A.


\begin{definition}
\label{def:Delta}
Let $\Delta(2^m, n)$ be the largest free distance ${\cal D}$ such that there exists a rate $(n-1)/n$  systematic MDS convolutional code over $GF(2^m)$ with column distance profile as in (\ref{eq_MDS}).
\end{definition}




The main problem that we address in this paper is to determine exact values, or constructive lower bounds, for $\Delta(2^m, n)$. Please note that there is no restriction of the degree $D$  in Definition~\ref{def:Delta}.

There are few known code constructions in the literature, beyond those based on superregular matrices. Table~\ref{tab:old} contains the current world records with respect to rate $(n-1)/n$ MDS codes, to the best of our knowledge. We will describe new codes in Section~\ref{sec:newcodes}.
\begin{table}
\begin{center}
\begin{tabular}{|r|c|c|c|}
\hline
  Rate & Field size  & $\cal D$ & Description\\
  \hline
1/2 &        4 &    4 &    \cite{JustesenHughes1974} \\
1/2 &        8 &    6 &    \cite{JustesenHughes1974} \\
1/2 &        8 &    6 &    \cite{StronglyMDS2006} superregular\\
1/2     &    32 &    8 &    \cite{StronglyMDS2006} superregular\\
2/3     &    16 &    5 &    \cite{StronglyMDS2006} ad hoc\\
2/3     &    64    & 5 &    \cite{StronglyMDS2006} superregular\\
3/4     &    16 &    3 &    \cite{StronglyMDS2006} superregular\\
\hline
\end{tabular}
\vspace{0.1cm}
\caption{Some rate $(n-1)/n$ MDS codes (not necessarily systematic) described in the literature. }
\label{tab:old}
\end{center}
\end{table}

Although this paper focuses on rate $(n-1)/n$ MDS codes, we observe that the following lemma, that follows directly from Theorem~2 in \cite{Gabidulin1988}, implies that our results will also provide rate $1/n$ MDS codes.

\begin{lemma}
If a systematic rate $(n-1)/n$ MDS code of memory $D$ and free distance $D+2$ exists, then its dual code is equivalent to a systematic rate $1/n$ MDS code of  memory $D$ and free distance $(n-1)(D+1)+1$.
\end{lemma}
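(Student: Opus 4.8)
The plan is to pin down the dual code explicitly, reduce the ``MDS'' assertion to a single column‑distance inequality, and then obtain that inequality from Lemma~\ref{lemma:biglemma}.

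\emph{Identifying the dual.} A parity‑check matrix of the given code ${\cal C}$ is $H(x)=(h_1(x),\dots,h_k(x),1)$ with $h_j(x)=\sum_{i=0}^{D}r_{i,j}x^i$. This $1\times n$ polynomial matrix is at the same time a generator matrix of ${\cal C}^{\perp}$, and since its last entry is the constant $1$ it is already a systematic polynomial encoder of a rate $1/n$ code (with the systematic position last); moving that position to the front by a coordinate permutation yields the asserted systematic rate $1/n$ code. Feeding the constant information word equal to the $j$‑th unit vector into the systematic encoder of ${\cal C}$ gives a codeword whose $D$‑th truncation has nonzero first block and weight $1+\mathrm{wt}(h_j)$; as $d_D({\cal C})=D+2$ this forces $\mathrm{wt}(h_j)=D+1$, so every $r_{i,j}\neq 0$ and $\deg h_j=D$. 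Hence ${\cal C}^{\perp}$ has memory exactly $D$, and the codeword $(1,h_1(x),\dots,h_k(x))$ of ${\cal C}^{\perp}$ has weight $1+k(D+1)=(n-1)(D+1)+1$, giving $d_{\mathrm{free}}({\cal C}^{\perp})\le (n-1)(D+1)+1$.

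\emph{Reduction.} For a systematic rate $1/n$ code the Singleton bound for the $l$‑th truncated block code gives $d_l\le (n-1)(l+1)+1$, and appending a zero information block to a minimum‑weight truncated codeword shows $d_{l+1}\le d_l+(n-1)$. So once we prove $d_D({\cal C}^{\perp})\ge (n-1)(D+1)+1$, a downward induction on $l$ forces $d_l({\cal C}^{\perp})=(n-1)(l+1)+1$ for all $l\le D$, i.e.\ ${\cal C}^{\perp}$ is MDS, and with the upper bound above its free distance is exactly $(n-1)(D+1)+1$. Thus it suffices to show that no $D$‑th truncated codeword of ${\cal C}^{\perp}$ with nonzero first block has more than $D$ zero coordinates.

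\emph{The bound via Lemma~\ref{lemma:biglemma}.} Reversing the order of the $D+1$ blocks is a coordinate permutation carrying the $D$‑th truncated code of ${\cal C}^{\perp}$ onto the row space of $H^{(D)}$, i.e.\ onto $({\cal C}^{(D)})^{\perp}$, and carrying a truncated codeword with nonzero first block onto a vector ${\bf w}H^{(D)}$ with $w_D\neq 0$. Suppose, for contradiction, such a vector has at least $D+1$ zero coordinates. None of them lies in the block where the entries are $w_D$ and $w_D r_{0,j}$ (nonzero, since $w_D\neq 0$ and $r_{0,j}\neq 0$), so with ${\cal W}=\mathrm{supp}({\bf w})\ni D$ the vanishing ``parity'' coordinates yield at least $|{\cal W}|$ homogeneous linear relations among $(w_t)_{t\in{\cal W}}$ with the nontrivial solution $(w_t)_{t\in{\cal W}}$ itself; one checks that the coefficient matrix of any $|{\cal W}|$ of these relations is precisely a square submatrix of $H'^{(D)}$ whose rows are indexed by ${\cal W}$, and it is singular. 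The remaining — and main — point is that the columns can be chosen so that this submatrix is \emph{proper} in the sense of Definition~\ref{def:ksuperregular}: the inequality $j_l\le k\, i_l$ unwinds to the statement that, for every $b$, the last $b+1$ blocks of the truncated ${\cal C}^{\perp}$‑codeword contain at least $b+1$ zero coordinates. This fails for arbitrary codewords but holds for one of weight $\le (n-1)(D+1)$, because $d_D({\cal C})=D+2$ — equivalently, $k$‑superregularity of $H'^{(D)}$ — prevents the zeros of such a codeword from clustering toward the front. Carrying out this interval bookkeeping is the crux of the proof; it is the content of Theorem~2 of \cite{Gabidulin1988}, which may also be invoked directly. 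With it, Lemma~\ref{lemma:biglemma} makes the proper submatrix nonsingular, a contradiction, and the reduction above then yields the stated memory, MDS property, and free distance $(n-1)(D+1)+1$.
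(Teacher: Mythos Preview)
The paper does not prove this lemma at all: it merely states that the result ``follows directly from Theorem~2 in \cite{Gabidulin1988}.'' Your proposal supplies considerably more of the surrounding argument --- identifying the dual explicitly, verifying its memory, bounding the free distance from above, and reducing the MDS claim to the single inequality $d_D({\cal C}^\perp)\ge (n-1)(D+1)+1$ --- but at the decisive combinatorial step (choosing the zero coordinates so that the resulting singular minor is \emph{proper}) you too invoke Gabidulin's Theorem~2, so in substance the two approaches coincide.
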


\subsection{Random convolutional codes}
\label{subsec:random}

In the terminology of this paper, the random approach \cite{Tetrys2011,CTCP-2011} consists of selecting the coefficients of $r_{ij}$  independently at random. The advantage of this is that one can pick codes with large degrees, and that over large fields the expected performance is ``reasonably good'', although the exact loss compared to optimum average performance or optimum guaranteed worst case performance remains to be determined.

Coefficients need to be transmitted in the headers of the data packets, but this represents only a small rate loss when large packets are transmitted.

\begin{proposition}
Consider a hybrid scheme where the first blocks of coefficients $r_{i,j}$ (until time $i=D$) are selected fixed, and subsequent random coefficients $r_{i,j}$ for $i>D$ are selected at random.  Thus the parity check equation will be on the form
\begin{equation}
H(x)= H_{CDP}(x) + H_{Random}(x)
\end{equation}
where
\[
H_{CDP}(x) = (\sum_{i=0}^{D}r_{i,1}x^i,\ldots,\sum_{i=0}^{D}r_{i,k}x^i,1)
\]
and
\[
 H_{Random}(x)(\sum_{i=D+1}^{?}R_{i,1}x^i,\ldots,\sum_{i=D+1}^{?}R_{i,k}x^i,0)
\]
where all $R_{i,j}$ are nonzero randomly selected coefficients and where the degree of the random polynomials does not need to be fixed (except by the application protocol). Then the initial CDP (until time $D$) is not affected by the random part of the code construction.
\end{proposition}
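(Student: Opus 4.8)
The plan is to observe that the $l$-th column distance $d_l$ depends only on the $l$-th truncated block code ${\cal C}^{(l)}$, and that for $l\le D$ this truncated code is assembled exclusively from the coefficients $r_{i,j}$ with $i\le D$ --- the ``fixed'' coefficients appearing in $H_{CDP}(x)$ --- so that it is literally insensitive to the random tail $H_{Random}(x)$, all of whose coefficients $R_{i,j}$ carry indices $i\ge D+1$.

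First I would make precise which coefficients enter each truncation. From the construction~(\ref{eq:parch}), $H^{(l)}$ is built from the blocks $H_0,\ldots,H_l$, hence from the rows $R_0,\ldots,R_l$, hence from exactly the coefficients $r_{i,j}$ with $0\le i\le l$, $1\le j\le k$; likewise the systematic encoder $G^{(l)}$ in~(\ref{eq:systenc}) is built from $G_0,\ldots,G_l$, which involve the same coefficients. In the hybrid scheme, for every $l\le D$ these are precisely the coefficients occurring in $H_{CDP}(x)$, and no random coefficient $R_{i,j}$ (index $i\ge D+1>l$) appears. Consequently, for every $l\le D$ the truncated block code ${\cal C}^{(l)}$ --- equivalently its parity check matrix $H^{(l)}$, equivalently the row space of $G^{(l)}$ (these coincide since $G^{(l)}H^{(l)\top}=0$ and the dimensions add up to $n(l+1)$) --- is exactly the code determined by $H_{CDP}(x)$ alone.

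Next I would invoke the definition of the column distance: $d_l$ is the minimum Hamming weight over all length-$(l+1)n$ truncated codewords $[c]_l$ whose first block $(c_1^{(0)},\ldots,c_n^{(0)})$ is nonzero, and the set of such truncated codewords is exactly $\{[v]_l : H^{(l)}[v]_l^\top = 0\}$, because in a systematic convolutional encoder only the first $l+1$ information blocks and the coefficient blocks $G_0,\ldots,G_l$ influence the first $l+1$ output blocks --- this is the block-triangular, causal shape of $G^{(L)}$. Hence, for $l\le D$, both this codeword set and the ``nonzero first block'' restriction are functions of $H_{CDP}(x)$ only, so $d_l$ is the same with or without the random tail. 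Therefore $(d_0,d_1,\ldots,d_D)$ is unchanged; in particular, if the fixed part is chosen so that $d_j=j+2$ for all $j\le D$ as in~(\ref{eq_MDS}), the hybrid code retains this optimal initial CDP, and its free distance is still at least $D+2$.

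The argument is essentially bookkeeping; the only spot deserving a sentence of care is the standard identification of a length-$(l{+}1)$-block prefix of a codeword of the full (and, past time $D$, randomly time-varying) convolutional code with a codeword of the block code ${\cal C}^{(l)}$ defined by $H^{(l)}$, which is immediate from the block-triangular form of $G^{(L)}$ in~(\ref{eq:systenc}). I would also flag what is \emph{not} being asserted: the proposition says nothing about $d_l$ for $l>D$ or about the exact free distance, both of which the random part can in general affect.
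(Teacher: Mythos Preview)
Your proposal is correct and follows the same approach as the paper: the paper's proof is the single sentence ``Obvious: Only the first component $H_{CDP}(x)$ of the parity check matrix determines the initial part of the CDP,'' and your argument is exactly this observation spelled out in detail --- namely that for $l\le D$ the truncated matrix $H^{(l)}$ (and hence $d_l$) involves only coefficients $r_{i,j}$ with $i\le l\le D$, all of which belong to the fixed part.
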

\begin{proof}
Obvious: Only the first component $H_{CDP}(x)$ of the parity check matrix determines the initial part of the CDP.
\end{proof}

\emph{Our suggestion} is to use such hybrid codes, \emph{i. e.}  codes where the terms of degree $0,\ldots,D$ of the parity check polynomials are preselected constants yielding an optimum initial column distance profile, while subsequent random parity checks are added as needed. This guarantees optimum recovery for the simplest and most likely erasure patterns, and hence  better performance than random codes for light to moderate erasure patterns, while still allowing the degree to grow if required by the application.

\section{New codes}
\label{sec:newcodes}
Gluesing-Luerssen \emph{et. al.} \cite{StronglyMDS2006}  use  superregular matrices to design codes. However, the authors also give examples of codes that are better than the ones constructed from superregular matrices, and note that "..the abundance
of (small) examples suggests that such a construction
might be possible and might lead to smaller alphabets for given
parameters than the construction ,[...]  We will
leave this as an open question for future research."

So here comes the future research. In this section we present constructions and a new search algorithm that, in combination, improve our knowledge of  $\Delta(2^m,n)$ for almost all sets of parameters, with respect to what we find in the literature.






\subsection{Codes with free distance ${\cal D} \in \{3,4\}$}
\label{sec:dist34}

We present two optimum constructions, for ${\cal D} \in \{3,4\}$. For ${\cal D}=3$ the construction is simple, but we have not seen it presented in prior literature.




We have tacitly assumed the following fact for the constant terms. Here comes the justification.
\begin{lemma}
\label{lem:constantone}
We can w.l.o.g assume $r_{0,1}=\cdots=r_{0,n}=1$.
\end{lemma}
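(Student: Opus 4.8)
The plan is to show that scaling each column block of the code by a nonzero constant, and each parity-check row by a nonzero constant, preserves the relevant structure (the systematic form, the rate, and most importantly the CDP), and that with suitable choices of these constants one can force the first row $H_0 = (r_{0,1},\ldots,r_{0,k},1)$ to become the all-ones vector. Concretely, I would first record that $r_{0,n}=1$ already holds by the definition of $H_0=(R_0|1)$, so only $r_{0,1},\ldots,r_{0,k}$ need attention; also each of these must be nonzero, since if some $r_{0,j}=0$ then the weight-2 codeword supported on positions $j$ and $n$ in the first block would not exist, contradicting $d_0=2$ (more precisely, $d_0=2$ requires that for each information position $j$ there is a weight-$\le 2$ codeword in block $0$, forcing $r_{0,j}\neq 0$).

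Next I would introduce the column scaling. Let $\lambda_1,\ldots,\lambda_k,\lambda_n$ be nonzero elements of $\mathbb{F}$, and let $\Lambda=\mathrm{diag}(\lambda_1,\ldots,\lambda_n)$ act on each length-$n$ block, i.e. on the full length $n(L+1)$ vectors via the block-diagonal matrix $I_{L+1}\otimes\Lambda$. Applying this column operation to $H^{(L)}$ gives the parity-check matrix of an equivalent code; since $\Lambda$ acts identically on every block and is a diagonal (monomial) transformation, Hamming weights of codewords within each block are preserved, hence every column distance $d_l$ — and in particular the whole CDP and the property "first block nonzero" — is unchanged. The effect on the coefficients is $r_{i,j}\mapsto r_{i,j}\lambda_j/\lambda_n$ after renormalizing the last column back to $1$ by a compensating row scaling (see below). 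Choosing $\lambda_n=1$ and $\lambda_j = r_{0,j}^{-1}$ for $j=1,\ldots,k$ makes $r_{0,j}\mapsto 1$ for all $j$, as desired, while the systematic structure (identity block in $G_0$, zero blocks elsewhere) is restored by the matching column permutation-free scaling on the generator side, $G^{(L)}\mapsto G^{(L)}(I_{L+1}\otimes\Lambda)$ followed by left-multiplication by $I_{L+1}\otimes\mathrm{diag}(\lambda_1^{-1},\ldots,\lambda_k^{-1})$ to recover $I_k$ in the leading block. One checks $G\,(I\otimes\Lambda)\,(I\otimes\Lambda)^{-\top}H^\top = GH^\top=0$, so orthogonality is maintained automatically.

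The one genuine subtlety — and the step I expect to need the most care — is that after column scaling the last coordinate of each block is $r_{i,n}\lambda_n = r_{i,n}$, and for $i>0$ we assumed $r_{i,n}=0$ in the construction, while for $i=0$ it is $1$; so no row scaling is actually needed to keep the "$(R_i|1)$-then-$(R_i|0)$" shape, and $\lambda_n=1$ is the correct (and only necessary) choice. I would therefore present the argument cleanly as: (i) $r_{0,j}\neq 0$ for all $j$, forced by $d_0=2$; (ii) the monomial column transformation $I_{L+1}\otimes\mathrm{diag}(r_{0,1}^{-1},\ldots,r_{0,k}^{-1},1)$ yields an equivalent systematic code with identical CDP; (iii) in the transformed code all constant-term coefficients equal $1$. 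Hence w.l.o.g. $r_{0,1}=\cdots=r_{0,n}=1$, completing the proof.
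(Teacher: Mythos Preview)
Your proposal is correct and takes essentially the same approach as the paper: both argue that (i) every $r_{0,j}$ must be nonzero in order to have $d_0=2$, and (ii) a diagonal column scaling by $r_{0,j}^{-1}$ on each block yields an equivalent code with identical CDP and all constant-term coefficients equal to $1$. The paper's proof is a two-sentence version of your more carefully worked-out argument; the only minor wrinkle is that your justification of (i) is phrased a bit awkwardly---the cleanest statement is simply that if $r_{0,j}=0$ then the $j$-th row of $G_0$ has Hamming weight $1$, giving $d_0\le 1$.
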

\begin{proof}
If there is a  $r_{0,j}$ equal to zero, then $d_0 <2$. We don't want that.
Then assume some nonzero $r_{0,j} \neq 1$. If we multiply the corresponding column of $G^{(D)}$ by $r_{0,j}^{-1}$, we obtain a new code with the same CDP and weight structure.
\end{proof}

\begin{proposition}
\label{thm:d3}
$\Delta(q^m,q^m) = 3$  for $q$   prime  and $m \geq 0$.
\end{proposition}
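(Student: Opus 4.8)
The plan is to prove the two inequalities $\Delta(q^m,q^m)\ge 3$ and $\Delta(q^m,q^m)\le 3$ separately. Throughout write $Q=q^m$, $\mathbb{F}=GF(Q)$ and $k=Q-1$, and use Lemma~\ref{lemma:biglemma} to translate statements about the CDP into statements about the $k$-superregularity of the matrix $H'^{(D)}$ of~(\ref{eq:ksuperregular}); by Lemma~\ref{lem:constantone} we may always take $r_{0,1}=\cdots=r_{0,k}=1$, and $m\ge 1$ is assumed so that $k\ge 1$.

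For the lower bound I would exhibit an explicit code of memory $D=1$: set $r_{0,j}=1$ for all $j$ and let $(r_{1,1},\ldots,r_{1,k})$ run through the $k=Q-1$ nonzero elements of $\mathbb{F}$ in any order. The associated matrix $H'^{(1)}$ has two rows; one checks that every proper $1\times1$ minor is one of the nonzero scalars $r_{0,j}=1$ or $r_{1,j}\neq 0$, and that every proper $2\times2$ minor (which must use both rows) is either $r_{1,j_2}-r_{1,j_1}$ with $j_1<j_2\le k$, nonzero because the $r_{1,j}$ are pairwise distinct, or a lower-triangular submatrix with unit diagonal and hence determinant $1$. Thus $H'^{(1)}$ is $k$-superregular and, by Lemma~\ref{lemma:biglemma}, the code has CDP $(2,3)$, so $\Delta(q^m,q^m)\ge 3$.

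For the upper bound, suppose toward a contradiction that $\Delta(q^m,q^m)\ge 4$, so some systematic MDS code has $d_2=4$. Since the column distances $d_0,d_1,d_2$ depend only on the coefficient blocks $R_0,R_1,R_2$, its degree-$2$ truncation is again an MDS code, now of memory $D=2$ with ${\cal D}=4$, so by Lemma~\ref{lemma:biglemma} the $3\times 3k$ matrix $H'^{(2)}$ is $k$-superregular. From the proper $1\times1$ minors in its second and third rows we get $r_{1,j}\neq 0$ and $r_{2,j}\neq 0$ for all $j$, and from the proper $2\times2$ minors in the first two rows supported on the first block of $k$ columns we get $r_{1,i}\neq r_{1,j}$ for $i\neq j$; hence $\{r_{1,1},\ldots,r_{1,k}\}$ is a set of $k=Q-1$ distinct nonzero elements of $\mathbb{F}$, i.e.\ it equals $\mathbb{F}\setminus\{0\}$. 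Now fix an index $a$ and, for each $b\in\{1,\ldots,k\}$, consider the $2\times2$ submatrix of $H'^{(2)}$ formed by rows $\{2,3\}$ and columns $\{a,\,k+b\}$: this submatrix is proper, its first row is $(r_{1,a},\,r_{0,b})=(r_{1,a},1)$ and its second row is $(r_{2,a},\,r_{1,b})$, so its determinant is $r_{1,a}r_{1,b}-r_{2,a}$. As $b$ varies the product $r_{1,a}r_{1,b}$ ranges over $r_{1,a}\cdot(\mathbb{F}\setminus\{0\})=\mathbb{F}\setminus\{0\}$, so this determinant vanishes for some $b$ unless $r_{2,a}=0$, contradicting $r_{2,a}\neq 0$. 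Therefore no systematic MDS code with ${\cal D}\ge 4$ exists, $\Delta(q^m,q^m)\le 3$, and combining with the previous paragraph gives $\Delta(q^m,q^m)=3$.

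The only place where care is needed is the bookkeeping of which submatrices of $H'^{(D)}$ count as \emph{proper} under Definition~\ref{def:ksuperregular} (the condition $j_l\le s\cdot i_l$ with $s=k$); in particular one should double-check that the decisive ``cross-block'' $2\times2$ minor in rows $\{2,3\}$, columns $\{a,k+b\}$ is genuinely proper, after which the contradiction is immediate. Note that no arithmetic property of $q$ beyond $GF(q^m)$ being a field is used in the argument; the restriction to $q$ prime merely fixes the field and matches the notation of the rest of the paper.
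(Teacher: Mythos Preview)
Your proof is correct. The lower bound construction and its verification are identical to the paper's. For the upper bound, the paper's own proof of Proposition~\ref{thm:d3} in fact only establishes $\Delta(q^m,q^m)\ge 3$; the reverse inequality is left to Theorem~\ref{th-upb-1} in Section~\ref{sec:bounds}. Your self-contained argument---forcing the proper $2\times2$ minor $\begin{vmatrix} r_{1,a} & 1\\ r_{2,a} & r_{1,b}\end{vmatrix}$ to vanish once $\{r_{1,1},\ldots,r_{1,k}\}=\mathbb{F}\setminus\{0\}$---is precisely the $k=q^m-1$ specialization of that theorem's proof (equivalently, condition~(iii) of Lemma~\ref{lemma-condition-D4}), so the approaches coincide; yours is simply more self-contained at this point in the exposition. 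The properness check you flag for the minor in rows $\{2,3\}$ and columns $\{a,k+b\}$ is indeed routine ($a\le k\le 2k$ and $k+b\le 2k\le 3k$).
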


\begin{proof}
Select $r_{0,i}=1$ and $r_{1,i}$, $i=1,\ldots,q^m-1$ as the $q^m-1$ distinct nonzero elements of $GF(q^m)$. Without loss of generality,  the parity check matrix of (\ref{eq:parch}) takes the form
\[
 H^{(1)} = \begin{pmatrix}
1 & 1 & \cdots & 1   & 1 &  0 & \cdots & 0 & 0\\
1 & 2 & \cdots & q^m-1 & 0 &  1 & \cdots & 1 & 1\\
\end{pmatrix}
\]

\[H'^{(1)}= \begin{pmatrix}
1 & 1 & \cdots & 1   &  0 & \cdots & 0\\
1 & 2 & \cdots & q^m-1 &  1 & \cdots & 1\\
\end{pmatrix}
\]
 is $q^m$-superregular because it is obvious that  all the proper minors of sizes 1 and 2 are nonsingular.
Clearly, $d_0$ = 2 and $d_1$ = 3.
\end{proof}


\begin{remark}
It is instructive to compare the construction of Proposition~\ref{thm:d3} with the binary Wyner-Ash codes \cite{WynerAsh1963}. Wyner-Ash codes were considered for digital media transmission already in 1974 \cite{bbcWynerAsh1974}. The Wyner-Ash code of length 4 has the binary polynomial parity check matrix
\[H_{WA}=\left(\begin{array}{cccc}
1+x+x^2 & 1+x & 1+x^2 & 1
\end{array} \right).\]
It is easy to see that the CDP of the Wyner-Ash code is $[2,2,3]$, \emph{i. e.} this is \emph{not} an MDS code. The construction of Proposition~\ref{thm:d3} can be considered as a $q^m$-ary generalization of the Wyner-Ash code, of memory 2, but this code \emph{is} an MDS code, with CDP $[2,3]$.
\end{remark}

For ${\cal D}=4$, we present an  optimum construction in Proposition~\ref{thm:d4}. Complete computer searches for $m\leq 5$ indicate that the construction is unique and, in a sense, much better than what can be achieved through other choices of the set of first degree coefficients $\{r_{1,i}\}$.
\begin{lemma}
\label{lemma-condition-D4}
For a code with a CDP of $[2,3,4]$,  its parity check matrix $H^{(2)}$ must satisfy \\
(i) $r_{i,s} \neq 0$ for $i=1,2$, $s=1,\ldots,k$, \\
(ii) $r_{i,s} \neq r_{i,t}$ for $i =1,2$, $1 \leq s < t \leq k$,\\
(iii) $r_{1,t}\neq r_{2,s}/r_{1,s}$ for $1\leq s,t \leq k$,\\
(iv)  $r_{2,s}/r_{1,s} \neq r_{2,t}/r_{1,t}$ for $1 \leq s <t \leq k$,\\
(v) $r_{2,s} - r_{2,t} \neq   r_{1,u}(r_{1,s}-r_{1,t})$ for $1 \leq s < t  \leq k$, $1 \leq u  \leq k$,\\
(vi) $r_{2,s}  \neq (r_{1,s}(r_{2,u}-r_{2,t})-r_{1,t}r_{2,u}+r_{1,u}r_{2,t})/(r_{1,u}-r_{1,t})$ for $1 \leq s<t<u \leq k$.
\end{lemma}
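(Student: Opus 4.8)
The plan is to reduce the statement to the minor characterization of Lemma~\ref{lemma:biglemma} and then to read each of the six conditions off a single explicit proper minor. By Lemma~\ref{lemma:biglemma}, a code with parity check matrix $H^{(2)}$ has CDP $[2,3,4]$ if and only if the associated matrix $H'^{(2)}$ (the matrix obtained from $H^{(2)}$ by deleting the systematic columns, cf. \eqref{eq:ksuperregular} with $D=2$) is $k$-superregular, i.e. every \emph{proper} $p\times p$ minor of $H'^{(2)}$, for $p\le 3$, is nonzero. Hence it suffices, for each of (i)--(vi), to exhibit a proper minor whose non-vanishing is equivalent to that condition (or, after dividing out factors already shown nonzero, implies it).

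First I would normalize: by Lemma~\ref{lem:constantone} we may take $r_{0,s}=1$ for all $s$, so $H'^{(2)}$ is the $3\times 3k$ matrix whose $b$-th block of $k$ columns ($b=1,2,3$) carries $r_{0,s}=1$ on row $b$, $r_{1,s}$ on row $b+1$, and $r_{2,s}$ on row $b+2$ (rows beyond the third being absent, positions below these being zero). Recall that a submatrix from rows $i_1<\dots<i_p$ and columns $j_1<\dots<j_p$ is proper precisely when $j_l\le k\,i_l$ for every $l$; in every minor used below the chosen columns lie in block~$1$ or block~$2$, so verifying properness is immediate in each case and I would only remark on it.

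Then I would go through the list. Condition (i) ($r_{i,s}\neq0$, $i=1,2$) is the $1\times1$ minor at row $i+1$, column $s$ of block~$1$. Condition (ii) ($r_{i,s}\neq r_{i,t}$, $s<t$) is the $2\times2$ minor on rows $\{1,i+1\}$ and columns $s,t$ of block~$1$, with determinant $r_{i,t}-r_{i,s}$. Condition (iii) comes from rows $\{2,3\}$, column $s$ of block~$1$ and column $t$ of block~$2$: the minor is $\det\bigl(\begin{smallmatrix}r_{1,s}&1\\ r_{2,s}&r_{1,t}\end{smallmatrix}\bigr)=r_{1,s}r_{1,t}-r_{2,s}$, and dividing by $r_{1,s}\neq0$ gives $r_{1,t}\neq r_{2,s}/r_{1,s}$. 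Condition (iv) comes from rows $\{2,3\}$ and columns $s,t$ of block~$1$: the determinant $r_{1,s}r_{2,t}-r_{1,t}r_{2,s}$, divided by $r_{1,s}r_{1,t}\neq0$, is (iv). Condition (v) is the $3\times3$ minor on all rows, columns $s,t$ of block~$1$ and column $u$ of block~$2$; a cofactor expansion along the first row gives $r_{1,u}(r_{1,t}-r_{1,s})+r_{2,s}-r_{2,t}$, which is (v). Finally, (vi) is the $3\times3$ minor on all rows and columns $s<t<u$ of block~$1$, namely $\det\bigl(\begin{smallmatrix}1&1&1\\ r_{1,s}&r_{1,t}&r_{1,u}\\ r_{2,s}&r_{2,t}&r_{2,u}\end{smallmatrix}\bigr)$; expanding, collecting the coefficient of $r_{2,s}$ (which is $r_{1,u}-r_{1,t}$), and dividing by it — legitimate because (ii) gives $r_{1,u}\neq r_{1,t}$ — rearranges the condition into exactly (vi).

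All the computations are elementary; the only places needing a little care are the two $3\times3$ expansions for (v) and (vi), and the conceptual point that in (iii), (iv), (vi) one is not merely asserting a minor is nonzero but dividing it by factors already known to be nonzero (the $r_{1,\cdot}$ for (iii)--(iv), the difference $r_{1,u}-r_{1,t}$ for (vi)), so these conditions are to be read in the presence of (i) and (ii). I expect that bookkeeping, together with the properness checks, to be the main — and entirely routine — obstacle. (A full enumeration of the proper minors of $H'^{(2)}$, organized by which blocks the chosen columns fall in, in fact shows that (i)--(vi) are also \emph{sufficient}, but only necessity is claimed here.)
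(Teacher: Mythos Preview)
Your proposal is correct and follows essentially the same approach as the paper: reduce to the $k$-superregularity criterion of Lemma~\ref{lemma:biglemma}, then identify each condition (i)--(vi) with the nonvanishing of a specific proper minor of $H'^{(2)}$. The paper organizes the argument by minor size (listing all types of $1\times1$, $2\times2$, and $3\times3$ proper minors and matching them to the conditions), whereas you organize it condition-by-condition, but the underlying computations and the minors you pick are exactly those appearing in the paper's enumeration.
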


\begin{proof}

From Lemma~\ref{lemma:biglemma} we need $H'^{(2)}$ to be $k$-superregular.

That all $1\times 1$ proper minors of $H'^{(2)}$ are non singular is equivalent to condition (i).

Proper minors of size $2\times 2$ are of the following types

\[ \begin{vmatrix}
\label{}
1 & 0\\
r_{i,s} &1
 \end{vmatrix} \mbox { , }
 \begin{vmatrix}
\label{}
1 & 0\\
r_{2,s} & r_{1,t}
 \end{vmatrix} \mbox { , }
 \begin{vmatrix}
\label{}
1 & 1\\
r_{i,s} & r_{i,t}
 \end{vmatrix} \mbox { , }
   \begin{vmatrix}
\label{}
r_{1,s} & 1\\
r_{2,s} & r_{1,t}
 \end{vmatrix} \mbox { , }
 \begin{vmatrix}
\label{}
r_{1,s} & r_{1,t}\\
r_{2,s} & r_{2,t}
 \end{vmatrix} \]

The first type are trivially non zero, the second type are non zero when condition (i) is satisfied. The third type being nonsingular is equivalent to condition (ii), the fourth type is guaranteed to be non zero if and only if condition (iii) is satisfied  and the fifth type being nonsingular is equivalent to condition (iv).

Finally, $3\times 3$ proper minors can be of four different types:

\[\begin{vmatrix}
\label{}
1 & 0 & 0\\
r_{1,s} &1 & 0\\
r_{2,s} & r_{2,t} & 1
 \end{vmatrix} \mbox { , }
\begin{vmatrix}
\label{}
1 & 1 & 0\\
r_{1,s} & r_{1,t} & 0\\
r_{2,s} & r_{2,t} & 1
 \end{vmatrix} \mbox { , }
\begin{vmatrix}
\label{}
1 & 1 & 0\\
r_{1,s} & r_{1,t} & 1\\
r_{2,s} & r_{2,t} & r_{1,u}
 \end{vmatrix} \mbox { , }
\begin{vmatrix}
\label{}
1 & 1 & 1\\
r_{1,s} &r_{1,t} & r_{1,u}\\
r_{2,s} & r_{2,t} & r_{2,u}
 \end{vmatrix}\]

 Those of the first type are trivially nonsingular. Condition (ii) takes care of those in the second type to be nonsingular. Those in the third type are nonsingular if and only if condition (v) is satisfied. The fifth type are non singular if and only if condition (vi) is satisfied.

\end{proof}
%

%

\begin{example}
Consider the code in Example~\ref{ex-rate23}. By checking conditions (i)-(vi) in Lemma~\ref{lemma-condition-D4} we observe that the code has CDP equal to [2,3,4].
\end{example}

\begin{proposition}
\label{thm:d4}
$\Delta(2^m,2^{m-1}) = 4.$
\end{proposition}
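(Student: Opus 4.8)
The goal is to show $\Delta(2^m, 2^{m-1}) = 4$, i.e., that over $GF(2^m)$ there exists a systematic rate $(n-1)/n$ MDS convolutional code with $n = 2^{m-1}$ and free distance $4$, and that no such code with free distance $5$ or larger exists. By Lemma~\ref{lemma:biglemma} and Lemma~\ref{lem:constantone}, the existence half reduces to exhibiting coefficients $r_{1,s}, r_{2,s}$ for $s = 1, \ldots, k = 2^{m-1}-1$ (with all $r_{0,s}=1$) satisfying conditions (i)--(vi) of Lemma~\ref{lemma-condition-D4}. So the plan is: first choose a clean algebraic recipe for the $r_{1,s}$ and $r_{2,s}$; second, verify (i)--(vi) for that recipe; third, prove the matching upper bound $\Delta(2^m, 2^{m-1}) \le 4$.

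For the construction I would try the natural ansatz $r_{1,s} = \beta_s$ and $r_{2,s} = \beta_s^2$ (or more generally $r_{2,s} = c\,\beta_s^2$ for a suitable constant $c$), where the $\beta_s$ range over a carefully chosen subset of $GF(2^m)^*$ of size $2^{m-1}-1$ — reminiscent of a Reed--Solomon / Vandermonde flavored construction, and exploiting that in characteristic $2$ squaring is additive. Under this ansatz several conditions simplify dramatically: (i) says $\beta_s \ne 0$; (ii) says the $\beta_s$ are distinct; (iv) says $\beta_s \ne \beta_t$, same thing; (iii) becomes $\beta_t \ne \beta_s$ again (since $r_{2,s}/r_{1,s} = \beta_s$), so automatically satisfied once the $\beta_s$ are distinct; (v) becomes $\beta_s^2 - \beta_t^2 \ne \beta_u(\beta_s - \beta_t)$, i.e. (dividing by $\beta_s-\beta_t \ne 0$ and using Frobenius) $\beta_s + \beta_t \ne \beta_u$; and (vi), after plugging in and simplifying with the additivity of squaring, should reduce to another statement of the form "$\beta_s + \beta_t + \beta_u \ne$ something" or a non-degenerate condition on triples. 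So the real combinatorial content is: choose $\{\beta_1,\ldots,\beta_k\} \subseteq GF(2^m)^*$ with $|{\cdot}| = 2^{m-1}-1$ such that no $\beta_u$ equals $\beta_s + \beta_t$ for distinct $s,t$ (a sum-free-type / Sidon-type condition), plus whatever (vi) contributes. The natural candidate is to take $\beta_s$ to lie in a coset of an index-$2$ additive subgroup, or equivalently the elements whose trace (or some fixed linear functional) equals $1$: if $\mathrm{Tr}(\beta_s) = 1$ for all $s$ then $\mathrm{Tr}(\beta_s + \beta_t) = 0 \ne 1$, killing condition (v) automatically, and this set has size exactly $2^{m-1}$, from which we drop the (at most one) problematic element or handle (vi) by the right choice of the functional; checking (vi) carefully against this trace condition is where I expect the bulk of the work.

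The hardest part will be nailing down condition (vi): after substituting $r_{1,s}=\beta_s$, $r_{2,s}=\beta_s^2$ and clearing the denominator $\beta_u - \beta_t$, one gets a cubic-looking relation among $\beta_s,\beta_t,\beta_u$ which must be shown never to hold on the chosen set. I would expand $\beta_s(\beta_u^2 - \beta_t^2) - \beta_t\beta_u^2 + \beta_u\beta_t^2$ minus $\beta_s^2(\beta_u-\beta_t)$ and factor; in characteristic $2$ this should collapse (using $x^2-y^2 = (x-y)^2$ and symmetry) into a product of differences like $(\beta_s-\beta_t)(\beta_s-\beta_u)(\beta_t-\beta_u)$ or $(\beta_s + \beta_t + \beta_u + \text{const})$ times such a product — i.e. the Vandermonde determinant $\prod_{s<t}(\beta_t-\beta_s)$ in disguise. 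If it factors as the $3\times 3$ Vandermonde, then (vi) is just distinctness again and the whole thing works with $\beta_s$ = the $2^{m-1}-1$ distinct nonzero trace-one elements (or one carefully excised subset). If an extra linear factor survives, I would absorb it by choosing the affine functional defining the coset appropriately (there are $2^m - 1$ nonzero linear functionals to choose from, giving enough freedom).

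For the upper bound $\Delta(2^m, 2^{m-1}) \le 4$, the plan is to show that conditions (i)--(vi)--plus the analogous conditions needed for a CDP of $[2,3,4,5]$ (i.e. $k$-superregularity of $H'^{(3)}$)--cannot all hold when $k = 2^{m-1}-1$. The cleanest route: by a counting/pigeonhole argument on the field. Conditions (i) and (ii) already force the $r_{1,s}$ to be $k = 2^{m-1}-1$ distinct nonzero elements, using up all but $2^{m-1}$ elements of $GF(2^m)^*$ (counting $0$, half the field is forbidden from appearing, or there are exactly $2^{m-1}$ remaining elements). Then the extra fourth-order superregularity conditions (which will again be sum-free-type constraints, e.g. $r_{1,s} + r_{1,t} + r_{1,u} \ne r_{1,v}$ or similar, derived by expanding the new $4\times 4$ proper minors of $H'^{(3)}$ as before) will be shown to be violated by any set of $2^{m-1}-1$ distinct nonzero elements of $GF(2^m)$, because such a large set cannot avoid all three-wise sums landing inside it — this is where a short additive-combinatorics argument (Cauchy--Davenport-style in the $GF(2)$-vector-space setting, or a direct Fourier/character count) finishes it. I expect the upper-bound half to be genuinely delicate, since it must pin down the threshold exactly at $k = 2^{m-1}-1$ rather than merely "large $k$", so the argument has to be tight; this and the factorization in condition (vi) are the two steps I would budget the most time for.
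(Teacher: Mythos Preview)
Your trace-based instinct for the construction is exactly right and matches the paper's, but your specific ansatz $r_{2,s}=\beta_s^2$ fails condition~(iii). You claim (iii) reduces to $\beta_t\neq\beta_s$ and is therefore automatic by distinctness, but condition~(iii) is stated for \emph{all} $1\le s,t\le k$, including $s=t$: the underlying proper $2\times 2$ minor takes rows $2,3$ and columns $s$ and $k+s$ of $H'^{(2)}$, which are distinct columns even when $s=t$. At $s=t$ the condition reads $r_{1,s}\neq r_{2,s}/r_{1,s}$, i.e.\ $r_{1,s}^2\neq r_{2,s}$, and your choice $r_{2,s}=\beta_s^2$ violates this identically. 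The scalar variant $r_{2,s}=c\beta_s^2$ does not rescue it: (iii) would require $\beta_t\neq c\beta_s$ for all $s,t$, i.e.\ that $c$ maps the trace-one coset entirely off itself, and a simple counting argument ($|T_1|=|cT_1|=2^{m-1}$, but $0\notin T_1\cup cT_1$) shows no such nonzero $c$ exists. The paper's fix is to take $a_s:=r_{1,s}$ to be the nonzero elements of the trace-\emph{zero} hyperplane $H_\beta$ (there are exactly $2^{m-1}-1$ of them, so no element need be dropped) and set $r_{2,s}=a_s(a_s+c)$ for a fixed $c\notin H_\beta$. Then $r_{2,s}/r_{1,s}=a_s+c\notin H_\beta$, so (iii) holds for every $s,t$; and the additive shift by $c$ is precisely what makes (v) survive with trace-zero $a_s$, since the relevant factor becomes $a_s+a_t+a_u+c$ with $a_s+a_t+a_u\in H_\beta$ and $c\notin H_\beta$. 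Your factorization of (vi) into $(\beta_s+\beta_t)(\beta_s+\beta_u)$ is correct and is essentially what the paper obtains too.

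For the upper bound you are working far too hard. The paper's Theorem~\ref{th-upb-1} gives $n-1\le (2^m-1)/(\mathcal D-2)$ from $2\times 2$ proper minors alone: the sets $\{r_{1,t}\}_t,\ \{r_{2,t}/r_{1,t}\}_t,\ \ldots,\ \{r_{\mathcal D-2,t}/r_{\mathcal D-3,t}\}_t$ are forced to be pairwise disjoint $k$-element subsets of $GF(2^m)\setminus\{0\}$, whence $(\mathcal D-2)k\le 2^m-1$. For $k=2^{m-1}-1$ and $\mathcal D=5$ this already fails. No $4\times 4$ minors, sum-free conditions, or Cauchy--Davenport are needed.
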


\begin{proof}
Let $\mathbb{F}=GF(2^m)$.

($\geq:$) The following construction gives a code that meets the requirements:\
The trace function \cite{MacWSloane77} is defined by
\[\begin{array}{cccl}
Tr^m() : &\mathbb{F} & \rightarrow &GF(2)\\
  & x & \rightarrow &  Tr^m(x) = \sum_{i=0}^{m-1}x^{2^i}. \end{array}\]
Consider the set
\[H_\beta = \{ x \in \mathbb{F} | Tr^m(\beta x) = 0\}.\]
When $\mathbb{F}$ is regarded as an $m$-dimensional vector space over $GF(2)$, the set $H_\beta$ 
is a hyperplane (an $(m-1)$-dimensional linear subspace) of $\mathbb{F}$. Let $k=2^{m-1}-1$, select $\beta$ 
as an arbitrary nonzero field element, select $c$ as an arbitrary constant in $\mathbb{F} \setminus H_\beta$. 
Then select $a_1,\ldots,a_k := r_{1,1},\ldots,r_{1,k}$ as all distinct nonzero elements in $H_{\beta}$, 
and set $b_s := r_{2,s}= a_s(a_s+c) = r_{1,s}(r_{1,s}+c)$ for $s=1,\ldots,k$. We need to verify that this construction satisfies the conditions in Lemma~\ref{lemma-condition-D4}. \\
\emph{(i)} This holds because $b_s = a_s(a_s+c)$ is a product of two nonzeros.\\
(\emph{ii)} All $a_s$'s are distinct. Assume that $b_s = b_t, s \neq t$. Then $0 = a_s(a_s+c) = a_t(a_t+c)= (a_s+a_t)c+a_s^2+a_t^2 = (a_s+a_t)c+(a_s+a_t)^2 = (a_s+a_t)(c+a_s+a_t)$. The first factor is nonzero since $a_s \neq a_t$. The second factor is also nonzero since $a_s+a_t \in H_\beta$ (because $H_\beta$ 
is closed under addition) while $c \not \in H_\beta$, 
a contradiction. \\
\emph{(iii)} Assume that $ a_sa_t = b_s$. Then $a_sa_t=a_s(a_s+c) \Rightarrow a_t = a_s+c,$ a contradiction, since $a_t \in H_{\beta}$ and $a_s+c\not\in H_{\beta}.$\\
\emph{(iv)} Assume that $ b_s/a_s = b_t/a_t, s \neq t$. Then $a_s + c = a_t + c \Rightarrow a_s = a_t,$ a contradiction.\\
\emph{(v)}
\begin{eqnarray} 
\nonumber b_s+b_t+a_u(a_s+a_t) &=&a_s(a_s+c)+a_t(a_t+c)+a_u(a_s+a_t) \\
\nonumber   &=& a_s^2+a_t^2+(a_s+a_t)(c+a_u)\\
\nonumber    &=&(a_s+a_t)^2+(a_s+a_t)(c+a_u)\\
\nonumber   &=& (a_s+a_t)(a_s+a_t+c+a_u)\end{eqnarray} which again is a product of nonzero factors, because $c \not \in H_\beta$ and  $a_s+a_t+a_u \in H_\beta$, and hence nonzero.\\
\emph{(vi)}
\begin{eqnarray}
\nonumber b_s+\frac{a_s(b_t+b_u) +a_ub_t+a_tb_u}{a_t+a_u} &=& a_s(a_s+c)+\frac{a_s(a_t(a_t+c)+a_u(a_u+c))+a_ua_t(a_t+c)+a_ta_u(a_u+c)}{a_t+a_u} \\
\nonumber  &=&a_s(a_s+c)+\frac{a_s(a_t+a_u)^2+a_sc(a_t+a_u)+a_ta_u(a_t+a_u)}{a_t+a_u} \\
\nonumber  &=& a_s(a_s+c)+a_s(a_t+a_u)+a_sc+a_ta_u \\
\nonumber  &=& a_s^2+a_sa_t+a_sa_u+a_ta_u\\
\nonumber  &=& (a_s+a_t)(a_s+a_u) \neq 0.
\end{eqnarray}


($\leq:$) This follows from Theorem~\ref{th-upb-1} in Section~\ref{sec:bounds}.

\end{proof}

\begin{remark}
By Theorem~\ref{th-upb-1} later, the construction in Proposition~\ref{thm:d4} is optimum not only in the sense that it offers the maximum  distance 4 for the given field size and code rate, but also it offers the minimum field size for a code of the given rate and distance 4, and the maximum code rate given the field size and distance 4. Moreover, complete computer search for field sizes $2^m \leq 32$ show that the construction is unique for these parameters.
\end{remark}


\subsection{Computer search algorithm}
\label{sec:search}

The goal of the search algorithm is to  select the coefficients $r_{i,j}$ successively, ordered first on $i$ and then reversely on $j$, in such a way that the conditions on the minors are met. 

\subsubsection{Some useful facts}

First, as for the constructions in Section \ref{sec:dist34}, we use Lemma \ref{lem:constantone} in order to set $r_{0,1}=\cdots=r_{0,k}=1$. In order to simplify the search we apply the following results.



\begin{lemma}
\label{lem:ordered}
We can w.l.o.g assume $r_{1,i}<r_{1,i+1}$, $i=1,\ldots,k-1$ for any choice of ordering $<$.
\end{lemma}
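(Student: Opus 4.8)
The plan is to exhibit a coordinate equivalence of convolutional codes that permutes the first-degree coefficients $r_{1,1},\dots,r_{1,k}$ arbitrarily while leaving all structural properties intact, and then to sort. Concretely, let $\pi$ be any permutation of $\{1,\dots,k\}$, extend it to $\{1,\dots,n\}$ by fixing the parity coordinate, $\pi(n)=n$, and let $P_\pi\in\mathbb{F}^{n\times n}$ be the associated permutation matrix. Replacing every block $G_i$ in (\ref{eq:systenc}) by $G_iP_\pi$ (equivalently, applying $P_\pi$ to each $n$-coordinate block of $H^{(L)}$) yields a generator matrix of a convolutional code $\mathcal{C}'$ obtained from $\mathcal{C}$ by the same fixed permutation of the $n$ symbol positions in every time block. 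Such a coordinate permutation is a code equivalence: for every truncation length $l$, the map $[v]_l\mapsto[v]_l$ reordered blockwise is a Hamming-weight-preserving bijection between $\mathcal{C}^{(l)}$ and $\mathcal{C}'^{(l)}$, so $d_l(\mathcal{C}')=d_l(\mathcal{C})$ for all $l$. Hence $\mathcal{C}'$ has exactly the same CDP and free distance as $\mathcal{C}$.

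Next I would check that this operation respects the normalizations already in force. The new parity-check polynomial is $(\sum_i r_{i,\pi(1)}x^i,\dots,\sum_i r_{i,\pi(k)}x^i,1)$, which is again of the systematic form of (\ref{eq:parch})--(\ref{eq:systenc}) after the (weight-preserving) row reduction that restores $I_k$ in the leading block; in particular $r'_{0,j}=r_{0,\pi(j)}=1$, so the normalization of Lemma~\ref{lem:constantone} is preserved, and the new first-degree coefficients are $r'_{1,j}=r_{1,\pi(j)}$.

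It remains to choose $\pi$ appropriately. Since Lemma~\ref{lem:ordered} is only applied in the search for MDS codes, the code in question has $d_1=3$; by Lemma~\ref{lemma-condition-D4}(ii) (or directly, because the $2\times2$ proper minor $\begin{vmatrix}1&1\\ r_{1,s}&r_{1,t}\end{vmatrix}=r_{1,t}-r_{1,s}$ of $H'^{(1)}$ must be nonsingular) the coefficients $r_{1,1},\dots,r_{1,k}$ are pairwise distinct. Therefore, for any fixed total ordering $<$ on $\mathbb{F}$, there is a unique permutation $\pi$ with $r_{1,\pi(1)}<r_{1,\pi(2)}<\dots<r_{1,\pi(k)}$; applying the corresponding equivalence produces a code with the required property and the same CDP, which proves the claim.

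The only point requiring care is the first paragraph: verifying that a per-block column permutation really is a convolutional-code equivalence and that returning to systematic form costs nothing. This is routine linear algebra once set up, so I do not anticipate a genuine obstacle; the remaining ingredient is merely the observation that distinctness of the $r_{1,i}$ is automatic for the codes of interest.
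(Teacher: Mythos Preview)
Your argument is correct and is precisely the intended one: a fixed permutation of the $k$ information positions applied identically to every block is a weight-preserving code equivalence, hence preserves the CDP, and since for the MDS codes under consideration the $r_{1,j}$ are pairwise distinct (from the nonvanishing of the $2\times2$ proper minors), one may sort them. The paper itself states Lemma~\ref{lem:ordered} without proof, treating it as self-evident, so there is nothing to compare against; your write-up simply fills in the routine justification.

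One minor simplification: the detour through ``row reduction that restores $I_k$'' is unnecessary. Working directly with the polynomial parity-check matrix $H(x)=(h_1(x),\dots,h_k(x),1)$, permuting the first $k$ entries gives $H'(x)=(h_{\pi(1)}(x),\dots,h_{\pi(k)}(x),1)$, which is already in the required systematic form with $r'_{i,j}=r_{i,\pi(j)}$; no further normalization is needed. Everything else stands.
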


\begin{lemma}
\label{lem:r11}
Consider an MDS convolutional code ${\cal C}$ with polynomial parity check matrix
\[ H(x) = (1+\sum_{i=1}^{D}r_{i,1}x^i,\ldots,1+\sum_{i=1}^{D}r_{i,k}x^i,1)\in \mathbb{F}[x].\]

Then the code ${\cal C}_c$  with parity check matrix

\[ H_c(x) = (1+\sum_{i=1}^{D}c^{i}r_{i,1}x^i,\ldots,1+\sum_{i=1}^{D}c^{i}r_{i,k}x^i,1)\in \mathbb{F}[x]\]

is also  MDS for any $c \in \mathbb{F} \setminus \{0\}$.
\end{lemma}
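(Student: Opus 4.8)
The plan is to recognize that the coefficient transformation $r_{i,j}\mapsto c^{i}r_{i,j}$ is precisely the effect of the substitution $x\mapsto cx$ on the polynomial entries of $H(x)$, and that this substitution induces a weight-preserving bijection between truncated codewords, hence leaves the entire column distance profile unchanged. Concretely, I would first observe that $H_c(x)=H(cx)$ entrywise: each non-trivial component $1+\sum_{i=1}^{D}r_{i,j}x^{i}$ becomes $1+\sum_{i=1}^{D}c^{i}r_{i,j}x^{i}$, while the trailing constant $1$ is fixed by $x\mapsto cx$.

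Next, recall that $v(x)\in\mathbb{F}[x]^{n}$ is a codeword of $\mathcal{C}$ iff $H(x)\,v(x)^{\top}=0$; substituting $x\mapsto cx$ gives $H(cx)\,v(cx)^{\top}=0$, so $v(x)\mapsto v(cx)$ maps codewords of $\mathcal{C}$ bijectively (the inverse being $x\mapsto c^{-1}x$, valid since $c\neq0$) onto codewords of $\mathcal{C}_{c}$. I would then check that this bijection preserves exactly the data defining the CDP: writing $v_{j}(x)=\sum_{l}v_{j}^{(l)}x^{l}$, the map sends $v_{j}^{(l)}$ to $c^{l}v_{j}^{(l)}$, which (as $c\neq0$) preserves the support of each $v_{j}(x)$, hence the Hamming weight of every block $(v_{1}^{(l)},\dots,v_{n}^{(l)})$ and the total weight of every truncation $[v]_{l}$. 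The first block $(v_{1}^{(0)},\dots,v_{n}^{(0)})$ is literally fixed (it is the value at $x=0$), so ``first block nonzero'' is respected, and a truncation to $l+1$ blocks maps to a truncation to $l+1$ blocks. Therefore $d_{l}(\mathcal{C}_{c})=d_{l}(\mathcal{C})$ for every $l$; in particular the degrees of the polynomial entries, and hence $D$, are unchanged, so $\mathcal{C}_{c}$ has CDP $(2,3,\dots,D+2)$, i.e.\ it is MDS.

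An alternative, purely linear-algebraic route uses Lemma~\ref{lemma:biglemma}: the matrix $H'^{(D)}$ associated with $\mathcal{C}_{c}$ equals $\mathrm{diag}(1,c,\dots,c^{D})\cdot H'^{(D)}\cdot\mathrm{diag}(1,\dots,1,c^{-1},\dots,c^{-1},\dots,c^{-D},\dots,c^{-D})$, where the scalar $c^{-b}$ is repeated $k$ times over block column $b$, because the $(a,b)$ block of $H'^{(D)}$ is $R_{a-b}$ and is replaced by $c^{a-b}R_{a-b}=c^{a}\cdot c^{-b}R_{a-b}$. Pre- and post-multiplication by invertible diagonal matrices rescales every minor by a nonzero factor, so $k$-superregularity of $H'^{(D)}$ is preserved, and the claim follows.

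I do not expect a serious obstacle; the only point requiring care is the bookkeeping that $x\mapsto cx$ acts diagonally on the block decomposition of a codeword, so that both the truncation index and the ``nonzero first block'' condition are respected. Once this is spelled out, either argument closes immediately, and I would present the substitution version as the main proof since it avoids reindexing the superregular matrix.
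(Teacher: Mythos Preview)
Your main argument is correct and is essentially the paper's own proof: the paper also exhibits the weight-preserving bijection $v_{j,i}\mapsto c^{\pm i}v_{j,i}$ between codewords of $\mathcal{C}$ and $\mathcal{C}_c$, which is exactly your substitution $x\mapsto cx$ unpacked at the coefficient level; you simply spell out more carefully than the paper why truncation length and the ``nonzero first block'' condition are respected. Your alternative route via Lemma~\ref{lemma:biglemma} (diagonal conjugation of $H'^{(D)}$ preserving $k$-superregularity) is not in the paper but is also valid and equally short.
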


\begin{proof}
Let $v(x)=(v_1(x),\ldots,v_n(x)) = (\sum_{i=0}^D v_{1,i}x^i,\ldots,\sum_{i=0}^Dv_{n,i}x^i)$. Then $v(x)H(x)^\top=0$ iff $v_c(x)H_c(x)^\top=0$  for
\[v_c(x) = (\sum_{i=0}^D c^{-i}v_{1,i}x^i,\ldots,\sum_{i=0}^Dc^{-i}v_{n,i}x^i).\]
\end{proof}

\begin{corollary}
\label{cor:roneisone}
If a systematic MDS convolutional code exists, we can w.l.o.g. assume that it has a parity check matrix with $r_{1,k}=1$.
\end{corollary}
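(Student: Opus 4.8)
The plan is to obtain this directly from Lemma~\ref{lem:r11}, after one small preliminary observation. First I would recall that, by Lemma~\ref{lem:constantone}, we may assume the parity check polynomials are of the form $1+\sum_{i=1}^{D}r_{i,j}x^i$, so that Lemma~\ref{lem:r11} applies verbatim, with its hypothesis on the constant terms already satisfied.

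Next I would verify that $r_{1,k}\neq 0$ in any systematic MDS code. By Lemma~\ref{lemma:biglemma}, such a code has a $k$-superregular matrix $H'^{(D)}$, so in particular every proper $1\times 1$ minor is nonsingular. The entry $r_{1,k}$ lies in the second block-row of $H'^{(D)}$ (the row of index $i_1=2$) at a column index $j_1\leq k$, and $k\leq 2k=k\cdot i_1$, so it is a proper $1\times 1$ minor and hence nonzero. (The same argument shows all $r_{1,j}$ are nonzero, consistent with condition (i) of Lemma~\ref{lemma-condition-D4}.)

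Finally I would set $c=r_{1,k}^{-1}\in\mathbb{F}\setminus\{0\}$ and invoke Lemma~\ref{lem:r11}: the code ${\cal C}_c$ is again MDS, and its degree-one coefficients are $c\,r_{1,j}=r_{1,j}/r_{1,k}$, the last of which equals $1$. This gives the claimed normalization, proving the corollary.

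There is essentially no obstacle here; the corollary is an immediate consequence of Lemma~\ref{lem:r11}. The only points deserving a word of care are the non-vanishing of $r_{1,k}$ (handled above via $k$-superregularity) and the remark that this normalization is compatible with the ordering of Lemma~\ref{lem:ordered}, since one may first rescale all degree-one coefficients by the common factor $c$ and then re-sort the columns, and both operations preserve the MDS property.
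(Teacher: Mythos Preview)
Your proof is correct and follows the same approach as the paper: apply Lemma~\ref{lem:r11} with $c=r_{1,k}^{-1}$. Your version is more careful than the paper's (which simply writes ``apply Lemma~\ref{lem:r11} with $c=a^{-1}$'' without checking $a\neq 0$), since you explicitly justify the non-vanishing of $r_{1,k}$ via $k$-superregularity and note compatibility with the other normalizations.
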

\begin{proof}
Assume that a systematic MDS convolutional code exists with a parity check matrix with $r_{1,k}=a \in \mathbb{F}$. Then apply Lemma ~\ref{lem:r11} with $c=a^{-1}$.
\end{proof}

\begin{lemma}
\label{lem:cyclotomic1}
Let $M$ be a $k$-superregular matrix over $GF(q^m)$, with $q$ a prime. Raising each element of $M$ to power $q$ yields another $k$-superregular matrix.
\end{lemma}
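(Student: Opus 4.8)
The plan is to exploit the fact that the map $\phi\colon x\mapsto x^{q}$ is a field automorphism of $GF(q^m)$ (the Frobenius automorphism), together with the elementary observation that applying a field automorphism to every entry of a matrix commutes with the formation of any minor. First I would recall that for a square matrix $A=(a_{ij})$ over any commutative ring the determinant $\det A=\sum_\sigma (\operatorname{sgn}\sigma)\prod_i a_{i\sigma(i)}$ is a polynomial in the entries with coefficients $\pm1$, hence with coefficients in the prime field $GF(q)$. Since $\phi$ is a ring homomorphism fixing $GF(q)$ pointwise, writing $\phi(A)$ for the entrywise image one gets $\det\!\bigl(\phi(A)\bigr)=\phi(\det A)$. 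In particular $\det\!\bigl(\phi(A)\bigr)=0$ if and only if $\det A=0$, because $\phi$ is a bijection with $\phi(0)=0$.

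Next I would check that the matrix $M^{\phi}$ obtained from $M$ by raising each entry to the power $q$ is again an $s$-lower triangular matrix of the shape in (\ref{eq:s-superregular}), with the same parameter $s=k$ and the same $L$. This is immediate: every position of $M$ that is forced to be $0$ maps to $0$ since $\phi(0)=0$, and every position carrying a repeated coefficient $r_{i,j}$ carries $\phi(r_{i,j})=r_{i,j}^{\,q}$ after the transformation, so the shifted-block/staircase pattern of (\ref{eq:s-superregular}) is preserved verbatim. Consequently the combinatorial notion of a \emph{proper} $p\times p$ submatrix (rows $i_1<\cdots<i_p$, columns $j_1<\cdots<j_p$ with $j_\ell\le s\cdot i_\ell$) is literally the same for $M$ and for $M^{\phi}$: a given choice of row and column index sets yields a proper submatrix of $M^{\phi}$ exactly when it yields one of $M$.

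Finally I would combine the two observations. Fix any proper $p\times p$ submatrix $P$ of $M$ with $p\le L+1$, and let $P^{\phi}$ be the submatrix of $M^{\phi}$ on the same rows and columns; then $P^{\phi}=\phi(P)$ entrywise, so $\det P^{\phi}=\phi(\det P)$, which is nonzero because $\det P\neq0$ by $k$-superregularity of $M$ and $\phi$ is injective. Since by the previous paragraph every proper submatrix of $M^{\phi}$ arises in this way, all proper minors of $M^{\phi}$ are nonsingular, i.e.\ $M^{\phi}$ is $k$-superregular.

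I do not expect a genuine obstacle here; the only point that needs a little care is the bookkeeping in the second paragraph — verifying that the entrywise $q$-th power preserves the precise $s$-lower triangular template of (\ref{eq:s-superregular}) (both the zero pattern and the shifted repetition of the $r_{i,j}$), so that ``proper submatrix of $M$'' and ``proper submatrix of $M^{\phi}$'' refer to the same index sets. Everything else is the standard fact that field automorphisms commute with determinants.
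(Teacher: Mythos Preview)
Your proof is correct and follows essentially the same approach as the paper: both arguments show that $\det(A^{\circ q})=(\det A)^q$ and conclude that nonsingularity of proper minors is preserved. Your presentation is slightly more conceptual --- you invoke the Frobenius automorphism $\phi$ and the fact that it fixes the prime field, whereas the paper verifies the identity $(\sum x_i)^q=\sum x_i^q$ explicitly via multinomial coefficients and checks $s(\sigma)^q=s(\sigma)$ by hand --- but the underlying idea is identical, and your extra paragraph confirming that the $s$-lower triangular template (and hence the notion of ``proper'') is preserved is a point the paper leaves implicit.
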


\begin{proof}

Given any square matrix
\[A=\begin{pmatrix}
a_{1,1} & \cdots & a_{1,n}\\
\vdots &  & \vdots \\
a_{n,1} & \cdots & a_{n,n}\\
\end{pmatrix}
\]
 with $a_{i,j}\in GF(q^m)$, by definition
 \[\det(A)=\sum_{\sigma\in S_n} s(\sigma)a_{1,\sigma(1)}\cdots a_{n,\sigma(n)}\]

 where $S_n$ is the group of permutations of $n$ elements and $s(\sigma)$ is the sign of each permutation $\sigma$.

 Also we have
 \[(x_1+x_2+\cdots +x_n)^q=\sum_{\begin{array}{c} q_1+\cdots +q_n=q \\ 0\leq q_i\leq q \end{array}} c(q_1,\ldots, q_n)x_1^{q_1}x_2^{q_2}\cdots x_n^{q_n}\]
 and $c(q_1, \ldots , q_n)=\binom{q}{q_1}\binom{q-q_1}{q_2} \cdots \binom{q-q_1-\cdots -q_{n-1}}{q_n}$.

 When $q$ is prime, $q$ is a divisor of  coefficient $c(q_1, \ldots, q_n)$  except in the cases $q_i=q, q_j=0$ for $j\neq i$, and this for each $i\in \{1, \ldots , n\}$. Therefore, in characteristic $q$ we have

 \[(x_1+x_2+\cdots +x_n)^q=x_1^q+ x_2^q + \cdots + x_n^q\]

 Back to the definition of determinant, in characteristic $q$ we have

 \[\left(\det(A)\right)^q=\left(\sum_{\sigma\in S_n} s(\sigma) a_{1,\sigma(1)}\cdots a_{n,\sigma(n)}\right)^q=\sum_{\sigma\in S_n} s(\sigma)^q a_{1,\sigma(1)}^q\cdots a_{n,\sigma(n)}^q\]

 Finally, $s(\sigma)$ is either 1 or -1, so in case $q=2$ then $s(\sigma)=1$ for any $\sigma \in S_n$, and $s(\sigma)^2=s(\sigma)$,  and in case $q$ is odd we have $s(\sigma)=s(\sigma)^q$.

 This gives

 \[\left(\det(A)\right)^q=\sum_{\sigma\in S_n} s(\sigma)^q a_{1,\sigma(1)}^q\cdots a_{n,\sigma(n)}^q=\sum_{\sigma\in S_n} s(\sigma) a_{1,\sigma(1)}^q\cdots a_{n,\sigma(n)}^q=\det(A^{\circ q})\]

 where $A^{\circ q}$ denotes the $q$-th Hadamard or Schur power of $A$, that is, the matrix whose entries are the the entries of $A$ raised to $q$.

 Now it is clear that given any proper minor $P$ of size $p$ of matrix $M$ in $GF(q^m)$, $P^{\circ q}$ is the corresponding proper minor in $M^{\circ q}$ and $P$ is nonsingular if and only if $P^{\circ q}$ is nonsingular, so $M$ is $k$-superregular if and only if $M^{\circ q}$ is k-superregular.

\end{proof}

\begin{corollary}
\label{lem:cyclotomic2}
In particular, let $M$ be a $k$-superregular matrix over $GF(2^m)$. Squaring each element of $M$ yields another $k$-superregular matrix.
\end{corollary}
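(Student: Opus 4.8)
The plan is to observe that this is nothing more than the special case $q=2$ of Lemma~\ref{lem:cyclotomic1}. Since $2$ is prime and $GF(2^m)$ has characteristic $2$, the hypothesis of Lemma~\ref{lem:cyclotomic1} is satisfied with $q=2$, and ``raising each element to the power $q$'' is precisely ``squaring each element.'' Thus I would simply invoke Lemma~\ref{lem:cyclotomic1} directly, with no further work required.

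If one wanted to be fully self-contained, the single ingredient is the characteristic-$2$ Frobenius identity $(x_1+\cdots+x_n)^2 = x_1^2+\cdots+x_n^2$, which, applied termwise to the Leibniz expansion of a determinant and using that all signs equal $1$ in characteristic $2$, gives $\det(A)^2 = \det(A^{\circ 2})$ for every square matrix $A$ over $GF(2^m)$. A proper $p\times p$ minor of $M^{\circ 2}$ is the square of the corresponding proper minor of $M$, hence vanishes if and only if the latter does; so $M$ is $k$-superregular if and only if $M^{\circ 2}$ is. There is no real obstacle here — the only ``content'' is the Frobenius identity, which has already been established in the proof of Lemma~\ref{lem:cyclotomic1}.
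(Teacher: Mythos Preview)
Your proposal is correct and matches the paper's own proof exactly: the paper simply states that this is the particular case $q=2$ of Lemma~\ref{lem:cyclotomic1}. Your optional self-contained elaboration is also accurate and mirrors the content of that lemma's proof.
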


\begin{proof}

This is just the particular case for $q=2$ of the Lemma above.

\end{proof}

Hence we also have:

\begin{corollary}
\label{cor:cyclotomic}
Assume that the values for $r_{0,i},\; i=1,\ldots,k$ and for $r_{1,k}$ are all fixed to $1$, as allowed by Lemma~\ref{lem:constantone} and Corollary~\ref{cor:roneisone}. Then, for $r_{1,k-1}$, it suffices to consider one representative of each cyclotomic coset.
\end{corollary}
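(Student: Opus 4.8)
The plan is to combine Corollary~\ref{cor:roneisone} with Corollary~\ref{lem:cyclotomic2}. By hypothesis, we have used Lemma~\ref{lem:constantone} to fix $r_{0,i}=1$ for all $i$, and Corollary~\ref{cor:roneisone} to fix $r_{1,k}=1$. Both of these reductions are ``without loss of generality'' in the sense that if \emph{any} systematic MDS convolutional code with the desired parameters exists, then one exists with these normalizations. The claim to prove is that, under these normalizations, the value of $r_{1,k-1}$ may be restricted to a single representative from each cyclotomic coset of $GF(2^m)$ over $GF(2)$ (i.e.\ one element from each orbit of the Frobenius map $x\mapsto x^2$).

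First I would invoke Lemma~\ref{lemma:biglemma}, which tells us that a systematic rate $(n-1)/n$ convolutional code has CDP $(2,3,\ldots,D+2)$ if and only if the associated matrix $H'^{(D)}$ is $k$-superregular. So ``MDS code exists with a given choice of coefficients'' translates directly into ``the matrix $H'^{(D)}$ built from those coefficients is $k$-superregular.'' Now suppose we have a $k$-superregular matrix $H'^{(D)}$ realizing some admissible choice of coefficients with $r_{0,i}=1$ and $r_{1,k}=1$. Apply Corollary~\ref{lem:cyclotomic2}: squaring every entry of $H'^{(D)}$ yields another $k$-superregular matrix, whose coefficients are $r_{i,j}^2$. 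Since the entries $1$ are fixed points of squaring, the normalizations $r_{0,i}=1$ and $r_{1,k}=1$ are preserved, while $r_{1,k-1}$ becomes $r_{1,k-1}^2$.

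Iterating the squaring operation $t$ times sends $r_{1,k-1}$ to $r_{1,k-1}^{2^t}$ while keeping the matrix $k$-superregular and keeping the other normalized coefficients fixed; the set of values $\{r_{1,k-1}^{2^t} : t \geq 0\}$ is precisely the cyclotomic coset of $r_{1,k-1}$. Hence, if there is an admissible (MDS-yielding) choice of all coefficients with $r_{1,k-1}=a$, then for every $a'$ in the cyclotomic coset of $a$ there is also an admissible choice with $r_{1,k-1}=a'$ --- namely the one obtained by applying squaring to the entire matrix the appropriate number of times. Therefore in the search it is enough to try one representative per cyclotomic coset for $r_{1,k-1}$: every achievable $r_{1,k-1}$ up to the symmetries already in play is equivalent to such a representative.

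The only subtlety --- and the one point worth stating carefully rather than as a routine calculation --- is that squaring the whole matrix also squares \emph{all} the remaining free coefficients $r_{i,j}$ (for $(i,j)$ not yet fixed), not just $r_{1,k-1}$. This is harmless for the purpose of the corollary: we are asserting that for the purpose of \emph{existence} of a code, fixing $r_{1,k-1}$ to a coset representative loses nothing, because whatever the rest of the coefficients were in the original code, their squares work equally well (by $k$-superregularity of the squared matrix) together with the squared, hence coset-representative-adjusted, value of $r_{1,k-1}$. So the corollary follows immediately once Lemma~\ref{lemma:biglemma} and Corollary~\ref{lem:cyclotomic2} are in hand; there is no real obstacle beyond making this bookkeeping explicit.
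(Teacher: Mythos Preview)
Your proof is correct and follows essentially the same approach as the paper's: the paper's argument is simply that squaring every entry of the matrix leaves the normalized entries $r_{0,i}=1$ and $r_{1,k}=1$ unchanged while sending $r_{1,k-1}$ to $r_{1,k-1}^2$, and preserves $k$-superregularity by Corollary~\ref{lem:cyclotomic2}. Your version is more detailed (explicitly invoking Lemma~\ref{lemma:biglemma}, iterating the squaring, and spelling out the bookkeeping about the other coefficients), but the core idea is identical.
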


\begin{proof}
Consider any minor of $M$. Squaring all coefficients in $M$ will not change the values of  $r_{0,i},\; i=1,\ldots,k$ or $r_{1,k}$. Thus if there is a $k$-superregular matrix with $r_{1,k-1} = v$, then there is also a $k$-superregular matrix with $r_{1,k-1} = v^2$.
\end{proof}

The search can be simplified (by constant factors $O((2^m-1)^n  \cdot (2^m-1) \cdot (n-2)!$) by use of Lemma~\ref{lem:constantone}, Corollary~\ref{cor:roneisone}, and Lemma~\ref{lem:ordered},  respectively. Corollary~\ref{cor:cyclotomic} reduces complexity by an extra factor of approximately $\log_2(2^m)=m$, but this reduction is not entirely independent of the other reductions. In summary, the search algorithm is highly exponential in complexity, but the tricks allow a deeper search than would otherwise be possible.

The search algorithm is sketched in Algorithm~\ref{alg:search}. The trickier steps are explained in some detail in Remark~\ref{rem:algorithm}.


\begin{algorithm} 
\SetKwData{Left}{left}\SetKwData{This}{this}\SetKwData{Up}{up}
\SetKwFunction{Union}{Union}\SetKwFunction{FindCompress}{FindCompress}
\SetKwInOut{Input}{Input}\SetKwInOut{Output}{output}
\SetAlgoLined
\KwResult{finds good $2^m$-ary MDS codes of rate $(n-1)/n$}
\Input{Field size $2^m$, target distance ${\cal D}^*$, code length $n$}
\KwData{$\rho$ points to current position} 
\BlankLine
initialization\;
$value(r_{0,i}) := 1, i=1,\ldots,k,  value(r_{1,k})=1$\;
\emph{Precompute the set of proper submatrices} ${\cal M} = \bigcup_{\rho=r_{1,k-1}}^{r_{{\cal D}^*-2,1}} {\cal M}_\rho$\;
$\rho:=r_{1,k-1}$\;
\emph{Precompute the set of legal values} ${\cal L}(\rho)$\;
\While{ $\rho \leq r_{{\cal D}^*-2,1}$ and \emph{more coefficient values to check for $\rho$ } }{
\eIf{more coefficient values to check for $\rho$}{ 
    assign next value to coefficient at $\rho$\;
    update determinants needed for ${\cal M}_{\rho+1}$, and ${\cal L}(\rho+1)$\;
    \If{deepest level so far}{ 
        record selected values of coefficients\;}
    $\rho = \rho+1$\;
}
{
    $\rho = \rho-1$\;
    }
    }
\caption{A computer search algorithm}
\label{alg:search}
\end{algorithm}

\begin{remark}
\label{rem:algorithm}
Here we explain the steps of  Algorithm~\ref{alg:search}.
\begin{enumerate}[(i)]
  \item In essence, the algorithm runs through a search tree. At each depth of the tree, $\rho$ points to one of the variables $r_{i,j}$ in (\ref{eq:ksuperregular}). Abusing notation,  we will also say that $\rho$ points to the current depth. Throughout the course of the algorithm, $\rho$  goes back and forth along $r_{1,k-1},\ldots,r_{1,1},r_{2,k},\ldots,r_{2,1},r_{3,k},\ldots$, \emph{i. e.} along the values of the last row of (\ref{eq:ksuperregular}) in reverse order starting at $r_{1,k-1}$ (since $r_{1,k},r_{0,1},\ldots,r_{0,k}$ can be assumed to be all equal to 1 by Lemma~\ref{lem:constantone}  and Corollary~\ref{cor:roneisone}.) We will in this context use the ordering ``$<$'' to refer to the reverse order of the last row of (\ref{eq:ksuperregular}), and addition and subtraction on $\rho$ moves $\rho$ left and right, respectively, on this row.
  \item Line 3: Let $\rho$ refer to one element $r_{i,j}$ in the last row of (\ref{eq:ksuperregular}). Then ${\cal M}_\rho$ is the set of (formal) proper submatrices of (\ref{eq:ksuperregular}) which have $\rho$ in its left lower corner. If all matrices in  ${\cal M} = \bigcup_{\rho=r_{0,k}}^{r_{D',1}} {\cal M}_\rho$  for some $D' \leq D$ are nonsingular, where $D = {\cal D}^*-2$ is the target maximum degree,  then the submatrix of (\ref{eq:ksuperregular}) with $r_{D',1}$ in the lower left corner is $k$-superregular. The set ${\cal M}_\rho$ can be found by recursion. (The number of proper submatrices ${\cal M}$ is related to the Catalan numbers. We omit the details.)
  \item Line 5: At depth $\rho$, values have already been assigned for each depth $\rho' < \rho$. Hence, by keeping track of subdeterminants already computed, for each determinant corresponding to a proper submatrix in ${\cal M}_\rho$, the value in $GF(2^m)$ that would make the determinant zero can be obtained in constant time. In other words, going \emph{once} through ${\cal M}_\rho$, we can identify the set ${\cal L}(\rho)$ of all illegal values for  coefficient $\rho$.
  \item Line 6: (a) A complete search version of the algorithm will successively try all values in ${\cal L}(\rho)$. For a faster but incomplete search, the algorithm may be set to skip an arbitrary subset of values in ${\cal L}(\rho)$ at each depth $\rho$. (b) The target distance ${\cal D}^*$ is an input parameter for the algorithm. In order to determine that a code has a maximum distance ${\cal D}$, it is necessary to verify that a complete search version of the algorithm will pass depth $\rho = r_{D,1}$, but not depth $\rho = r_{D+1,1}$.
  \item Line 9: Using the set of values currently assigned to coefficients at all depths $\rho' \leq \rho$, compute subdeterminants that will be useful for computing determinants in ${\cal M}_{\rho+1}$, and initialize the set of legal values ${\cal L}(\rho + 1)$ for the next depth.
  \item Complexity:  The assumptions enabled by the Lemmas of this section, together with the efficient computation of the determinants, allow a deeper search than would be possible with a na\"{\i}ve search. However, the depth of the search tree that finds a code of degree $D$ is $(n-1)D$, and for many of the ``early depths'', a complete search needs to go through almost $2^m$ values. The size of the set of proper submatrices also grows exponentially with $n$ and $D$. So the overall complexity is at least $O(2^{mn\cdot ({\cal D}^*-2)} \cdot |{\cal M}| \cdot w_d)$ where $|{\cal M}|$  is the number of proper submatrices. 
\end{enumerate}
\end{remark}

\subsection{Codes found by computer search with ${\cal D} \geq 5$ }
\label{sec:searchresults}

Here we present codes found from computer search, for field sizes of characteristic 2 ranging from 8 to 16384, and free distances ${\cal D}\geq 5$. Exact values of $\Delta(q^m,q^m)=3$ and $\Delta(q^m,q^{m-1})=4$ are provided by Propositions~\ref{thm:d3} and \ref{thm:d4}.

In Tables~\ref{tab:bounds3}--\ref{tab:bounds14}, each row summarizes what we have discovered about rate $k/n = (n-1)/n$ codes. The $\Delta$ column lists the maximum value of ${\cal D}$ for which we have found a code with CDP $[2,3,\ldots,{\cal D}]$. The absence of a $\geq$ sign in this column indicates that we have established, through an exhaustive search, that this value of ${\cal D}$ is indeed maximum for this rate and field size.  The \emph{Coefficients} column presents one encoder that possesses this CDP, in terms of $\log_\alpha()$ of the coefficients $(r_{1,k},\ldots,r_{1,1}),(r_{2,k},\ldots,r_{2,1}),(r_{3,k},\ldots,r_{3,1}),\ldots$, where $\alpha$ is the primitive element of the field. Note that the degree zero terms, $(r_{0,1},\ldots,r_{0,k})$ are suppressed since they are  assumed to be identically $1 = \alpha^0$. The \emph{R} column contains the rareness of the code, which will be  explained in Section~\ref{subsec:rareness}. In the \emph{Reference} column we include references in the few cases where ``similar codes'' (\emph{i. e.} codes over the same field which have the same CDP, but that do not necessarily possess a systematic encoder) have previously been described in the literature.
We do not list encoders found by the search if we also found codes with the same set of parameters (rate, CDP) over a smaller field.
Also, due to Lemma~\ref{lem:shorten}, we do not list codes of rate $(n-2)/(n-1)$ if there exist codes of rate  $(n-1)/n$ with the same CDP:

\begin{lemma}
\label{lem:shorten}
If a systematic MDS code with free distance $\cal D$ and rate $k/(k+1)$ exists for $k>1$,  then there is also a systematic MDS code with free distance $\cal D$ and rate $(k-1)/k$.
\end{lemma}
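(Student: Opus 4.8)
The plan is to show that a systematic MDS code of rate $(k-1)/k$ can be obtained from one of rate $k/(k+1)$ by \emph{shortening}, i.e. by deleting one of the information coordinates (equivalently, one of the polynomial parity-check components of degree-$\geq 0$). Concretely, suppose ${\cal C}$ has polynomial parity-check matrix $H(x) = (h_1(x),\ldots,h_{k}(x),1)$ with $h_j(x) = \sum_{i=0}^{D} r_{i,j}x^i$, and CDP $[2,3,\ldots,{\cal D}]$. I would form ${\cal C}'$ by simply dropping the last information component, i.e. take $H'(x) = (h_1(x),\ldots,h_{k-1}(x),1) \in \mathbb{F}[x]^{k}$, a rate $(k-1)/k$ systematic code of the same maximum degree $D$. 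The claim is that the CDP of ${\cal C}'$ is still $[2,3,\ldots,{\cal D}]$.

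The cleanest route is via Lemma~\ref{lemma:biglemma}: the CDP of ${\cal C}$ being $[2,3,\ldots,D+2]$ is equivalent to $H'^{(D)}$ (the $(D+1)\times k(D+1)$ matrix of \eqref{eq:ksuperregular} built from the $r_{i,j}$, $j=1,\ldots,k$) being $k$-superregular. Deleting the last information coordinate corresponds to deleting, within each block of $k$ consecutive columns of $H'^{(D)}$, the $k$-th column; what remains is exactly the matrix $H''^{(D)}$ associated with the rate $(k-1)/k$ code ${\cal C}'$, which needs to be $(k-1)$-superregular. So the key step is: \emph{every proper $p\times p$ minor of the $(k-1)$-column-block matrix $H''^{(D)}$ is also a proper minor of the $k$-column-block matrix $H'^{(D)}$.} I would verify this by a bookkeeping argument on column indices: if a selected column sits in position $j'$ within block $i$ of the $(k-1)$-version (so its global index is $(i-1)(k-1)+j'$ with $1\le j'\le k-1$), it maps to position $j'$ within block $i$ of the $k$-version (global index $(i-1)k+j'$); one checks that the "properness" inequality $j_l \le (k-1)\,i_l$ transforms into $j_l \le k\, i_l$ under this reindexing in a way that preserves properness, using that $j' \le k-1$ so the shifted index never exceeds $k\,i_l$. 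Since $H'^{(D)}$ is $k$-superregular, all its proper minors are nonsingular, hence so are all proper minors of $H''^{(D)}$, which is therefore $(k-1)$-superregular, and ${\cal C}'$ is MDS with the same ${\cal D}$ by Lemma~\ref{lemma:biglemma}.

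An alternative, more elementary argument avoids Lemma~\ref{lemma:biglemma} entirely: any truncated codeword $[c']_l$ of ${\cal C}'$ whose first block is nonzero can be extended to a truncated codeword $[c]_l$ of ${\cal C}$ by inserting a zero in the deleted information coordinate in every block (this satisfies the parity check $H^{(l)}$ since the deleted column is multiplied by $0$), and the first block of $[c]_l$ is then also nonzero; hence $d_l({\cal C}') \ge d_l({\cal C})$. Combined with the general upper bound $d_l \le l+2$ from \eqref{eq_MDS}, which holds for any systematic rate $(n-1)/n$ code, we get $d_l({\cal C}') = l+2$ for $l \le D$, i.e. ${\cal C}'$ is MDS with the same free distance ${\cal D}=D+2$.

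The main obstacle is essentially notational rather than conceptual: one must be careful that "dropping the last information symbol per block" really does land inside the family of matrices of the form \eqref{eq:ksuperregular} with $s=k-1$ (the block structure is preserved because we delete the \emph{same} coordinate in every time step), and that the properness condition is respected under the reindexing. Once the column-index bookkeeping is set up correctly, both arguments are short; the elementary one is perhaps safest to present since it only uses the already-established Singleton-type bound $d_l \le l+2$ and the trivial zero-extension map.
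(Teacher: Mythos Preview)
Your proposal is correct, and your primary route---deleting one fixed column from each $k$-block of $H'^{(D)}$ and checking that every proper minor of the resulting $(k-1)$-block matrix is still a proper minor of the original---is exactly the paper's proof, except that the paper states it in one sentence without the index bookkeeping, which you have carried out correctly. Your alternative argument via zero-extension of codewords and the Singleton-type bound $d_l\le l+2$ is also valid and, as you note, bypasses Lemma~\ref{lemma:biglemma} entirely; it is not the route the paper takes but is arguably cleaner.
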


\begin{proof}
Shorten the $k$-superregular matrix $H'^{(D)}$ by selecting any $j_0\in\{1,\ldots , k\}$ and removing columns $j_0, j_0+k, j_0+2k, \ldots $ in $H'^{(D)}$.

\end{proof}

%
%
%
%
%
%
%
%


\begin{table}
\begin{center}

\begin{tabular}{|r|r|c|c|c|}

\hline
  $n$ & $\Delta$  & Coefficients & R & Remark\\
 \hline
 2 & 6 & 0, 1, 4, 3 & 0.035 &  \cite{JustesenHughes1974}\\  %
  \hline
\end{tabular}

\vspace{0.1cm} \caption{Table of bounds on $\Delta(2^3,n)$ for the field defined by $1+\alpha+\alpha^{3} = 0$. 
}
\label{tab:bounds3}
\end{center}
\end{table}

\begin{table}
\begin{center}
\begin{tabular}{|r|r|c|c|c|}
\hline
  $n$ & $\Delta$  & Coefficients& R  & Remark\\
  \hline
 2 & 7 & 0, 1, 4, 3, 0 & 0.024  &   \\  %
 3 & 5 & 0 1, 4 0, 1 7  & 0.014   &   \cite{StronglyMDS2006} \\
  \hline

\end{tabular}

\vspace{0.1cm} \caption{Table of bounds on $\Delta(2^4,n)$ for the field defined by $1+\alpha+\alpha^{4} = 0$. 
Please also see Example~\ref{ex-2}. 
}
\label{tab:bounds4}
\end{center}
\end{table}

\begin{table}
\begin{center}
\begin{tabular}{|r|r|c|c|}
\hline
  $n$ & $\Delta$  & Coefficients& R \\
  \hline
 2 & 9 & 0, 1, 19, 5, 24, 15, 0 & $3.4 \cdot 10^{-8}$    \\  %
  3 & 6 & 0 1, 11 28, 21 6, 24 11 & $4.4 \cdot 10^{-5}$ \\
  5 & 5 & 0 1 18 2, 5 8 17 25, 3 2 13 18 & $5.2 \cdot 10^{-11}$  \\

\hline
\end{tabular}

\vspace{0.1cm} \caption{Table of bounds on $\Delta(2^5,n)$ for the field defined by $1+\alpha^{2}+\alpha^{5} = 0$. 
}
\label{tab:bounds5}
\end{center}
\end{table}

\begin{table}
\begin{center}
\begin{tabular}{|r|r|c|c|}
\hline
  $n$ & $\Delta$  & Coefficients& R\\
  \hline
  2 & 10 & 0, 1, 6, 61, 60, 46, 28, 23 & $ 1.2 \cdot 10^{-10}$ \\  
   3 & 7 & 0 1, 6 0, 2 37, 21 44, 55 28 & $ 4.1 \cdot 10^{-11}$  \\
   4 & $\geq$6 & 0 1 6, 2 6 26, 13 61 38, 30 33 60 &  $ 1.4 \cdot 10^{-11}$\\ 
   7 & $\geq$5 & 0 1 6 2 12 3, 14 36 26 25 51 13, 19 60 16 62 5 58 & $ 3.2 \cdot 10^{-20}$\\ 
\hline

\end{tabular}

\vspace{0.1cm} \caption{Table of bounds on $\Delta(2^6,n)$ for the field defined by $1+\alpha+\alpha^{6} = 0$. 
}
\label{tab:bounds6}
\end{center}
\end{table}

\begin{table}
\begin{center}
\begin{tabular}{|r|r|c|c|}

 \hline
   $n$ & $\Delta$  & Coefficients& R\\
  \hline
5 & $\geq$6 & 0 1 31 2, 62 103 64 125, 51 57 19 110, 11 39 43 114 &  $8 \cdot 10^{-18}$  \\ 
 8 & $\geq$5 & 0 1 31 2 62 32 103, 3 31 15 0 7 1 63, &   $6.4 \cdot 10^{-16}$ \\   
& & 8 94 119 51 41 10 17 &  \\
\hline

\end{tabular}

\vspace{0.1cm} \caption{Table of bounds on $\Delta(2^7,n)$ for the field defined by $1+\alpha^{3}+\alpha^{7} = 0$. 
}
\label{tab:bounds7}
\end{center}
\end{table}

\begin{table}
\begin{center}
\begin{tabular}{|r|r|c|c|}
\hline
  $n$ & $\Delta$  & Coefficients& R\\
  \hline
  2 & $\geq$11 & 0, 1, 25, 3, 0, 198, 152, 56, 68 & $2.2 \cdot 10^{-7}$ \\
   3 & $\geq$8 & 0 1, 25 0, 1 238, 100 106, 195 245, 37 33 & $2.0 \cdot 10^{-12}$ \\ 
  4 & $\geq$7 & 0 1 25, 2 25 198, 1 14 228, 113 74 214, 21 250 172 & $\approx 2 \cdot 10^{-17}$ \\  
   11 & $\geq$5 &0 96 95 176 156 169 160 81 11 245, 107 5 223 167 7 177 98 238 93 53, & $ \approx 3 \cdot 10^{-28}$ \\ 

  & &  37 208 233 89 75 74 184 31 119 100 &  \\
\hline

\end{tabular}

\vspace{0.1cm} \caption{Table of bounds on $\Delta(2^8,n)$ for the field defined by $1+\alpha^{2}+\alpha^{3}+\alpha^{4}+\alpha^{8} = 0$. 
}
\label{tab:bounds8}
\end{center}
\end{table}

\begin{table}
\begin{center}
\begin{tabular}{|r|r|c|c|}
\hline
  $n$ & $\Delta$  & Coefficients& R\\
  \hline
 2 & $\geq$12 & 0,    54,  91, 181, 267, 291, 379, 28, 95, 143 & $ 1.4 \cdot 10^{-11}$  \\ 
   6 & $\geq$6 & 0 280 362 276 426, 206 155 326 324 360, & $ 3.9 \cdot 10^{-11}$ \\ 
& & 356 447 507 312 144, 224 375 236 55 448 & \\
  13 & $\geq$5 & 0 19 325 321 356 397 317 455 98 130 149 413, &  $ 8.4 \cdot 10^{-27}$\\ 
& & 48 101 120 272 209 188 405 352 46 343 289 152, &  \\
 & &  318 80 256 98 255 274 147 340 392 453 30 451 & \\
 \hline

\end{tabular}

\vspace{0.1cm} \caption{Table of bounds on $\Delta(2^9,n)$ for the field defined by $1+\alpha^{4}+\alpha^{9} = 0$. 
}
\label{tab:bounds9}
\end{center}
\end{table}

\begin{table}
\begin{center}
\begin{tabular}{|r|r|c|c|}
\hline
  $n$ & $\Delta$  & Coefficients& R\\
  \hline
3 & $\geq$9 & 0 603, 246 106, 115 693, 483 544, 603 152, 815 788, 984 721 & $\approx  10^{-15}$ \\ 
 5 & $\geq$7 & 0 498 997 964, 560 214 101 723, 453 111 370 54,& $5 \cdot 10^{-18}$  \\ 
 & & 455 17 625 509, 904 431 926 856 & \\
  8 & $\geq$6 &  0 322 804 12 140 1004 384, 778 916 786 247 586 698 294, & $3 \cdot 10^{-24}$ \\ 
 & & 379 7 784 239 817 284 398, 178 588 110 41 425 976 393 & \\
  17 & $\geq$5 & 0 1 77 2 154 78 956 3 10 155 325 79 618 957 231 4, & $4 \cdot 10^{-39}$ \\ 
& &   308 0 4 77 11 1 200 10 80 3 24 155 87 325 619 618, &  \\
& & 958 768 255 404 577 976 368 374 709 33 530 109 677 594 652 226 &  \\
\hline

\end{tabular}

\vspace{0.1cm} \caption{Table of bounds on $\Delta(2^{10},n)$ for the field defined by $1+\alpha^{3}+\alpha^{10}=0$. 
}
\label{tab:bounds10}
\end{center}
\end{table}

\begin{table}
\begin{center}
\begin{tabular}{|r|r|c|c|}

\hline
  $n$ & $\Delta$  & Coefficients& R\\
  \hline
 2 & $\geq$13 & 0, 1992, 813, 1890, 440, 630, 1947, 1574, 1356, 234, 1266 & $ 1.0 \cdot 10^{-9}$  \\  
 4 & $\geq$8 & 0 1809 1118, 2027 1610 539, 1042 7 1730, &  $5.6 \cdot 10^{-15}$\\ 
& &  2020 591 1459, 902 899 1584, 172 1192 513 &  \\
  9 & $\geq$6 &0 1999 762 1845 1102 1115 1014 328, 1349 345 498 1561 27 987 1300 1793,  & $2.0 \cdot 10^{-22}$ \\ 
 & &  1728 562 488 304 43 71 1911 1140, 1524 660 465 327 322 748 1574 1414 &  \\
 \hline

\end{tabular}

\vspace{0.1cm} \caption{Table of bounds on $\Delta(2^{11},n)$ for the field defined by $1+\alpha^{2}+\alpha^{11} = 0$. 
}
\label{tab:bounds11}
\end{center}
\end{table}

\begin{table}
\begin{center}
\begin{tabular}{|r|r|c|c|}
\hline
  $n$ & $\Delta$  & Coefficients& R\\
  \hline
   2 & $\geq$14 & 0, 3294, 1040, 448, 3624, 2406, 826, 1122, 587, 1034, 342, 4037 & $<10^{-15}$  \\
     6 & $\geq$7 & 0 3202 2711 92 2688, 3908 1649 1252 3897 1604, 3687 3602 1603 2339 1350,   & $1.2 \cdot 10^{-14}$ \\ 
& & 1700 2969 104 3406 2679, 1345 919 3302 2116 810 & \\
  11 & $\geq$6 &0 669 4050 4007 745 3863 324 1617 3951 1343, & $3 \cdot 10^{-31}$ \\  
  & & 703 1123 782 3343 1919 3177 1839 1006 2183 426,  &  \\
& & 2139 2050 1676 1187 3222 467 1764 2387 2868 641, &  \\
  & & 2564 2249 3187 3114 3228 743 443 1220 3540 2620 &  \\
  \hline

\end{tabular}

\vspace{0.1cm} \caption{Table of bounds on $\Delta(2^{12},n)$ for the field defined by $1+\alpha^{3}+\alpha^{4}+\alpha^{7}+\alpha^{12} = 0$. 
}
\label{tab:bounds12}
\end{center}
\end{table}

\begin{table}
\begin{center}
\begin{tabular}{|r|r|c|c|}

\hline
  $n$ & $\Delta$  & Coefficients& R\\
  \hline
 3 & $\geq$10 & 0 337, 7672 6843, 3625 3361, 7970 7490,  &  $3.6 \cdot 10^{-11}$\\ 
 &   &  5531 2322, 5227 5758, 133 2290, 1453 189  &  \\
    5 & $\geq$8 & 0 441 2192 3413, 3222 7502 7405 4155, 88 5939 343 6171,  & $\approx 5 \cdot 10^{-21}$ \\ 
& & 1082 8149 2823 7269, 8022 6454 4999 3373, 3518 442 710 6968 & \\
  7 & $\geq$7 & 0 5160 5711 7681 748 5319, 2131 6233 723 4539 7315 5654,    & $2 \cdot 10^{-19}$ \\ 
 & & 5126 7465 3577 6826 5553 1131, 4954 6763 6593 1568 7157 8112,  & \\
 & & 1961 4310 877 2927 7197 2672 & \\
  13 & $\geq$6 &  0 5645 7651 3109 2678 802 6934 1946 5589 2833 5821 38,  &  $\approx 8 \cdot 10^{-37}$\\ 
& & 5394 2500 5877 3141 4724 3374 5191 7218 4844 423 822 6875,  &  \\
 & &  5712 6619 3935 6414 8025 1422 4391 5698 5481 6850 2635 4786,  &  \\
& &  556 2558 1063 5172 566 7978 3664 5848 3859 6905 6434 71 &  \\
\hline

\end{tabular}

\vspace{0.1cm} \caption{Table of bounds on $\Delta(2^{13},n)$ for the field defined by $1+\alpha+\alpha^{3}+\alpha^{4}+\alpha^{13} = 0$. 
}
\label{tab:bounds13}
\end{center}
\end{table}

\begin{table}
\begin{center}
\begin{tabular}{|r|r|c|c|}

\hline
  $n$ & $\Delta$  & Coefficients& R\\
  \hline
 4 & $\geq$9 &  0 61 9533, 1260 4487 6469, 3689 8777 4510, 11257 13252 1239,  & $3 \cdot 10^{-14}$ \\ 
& &  15121 10306 11679, 9618 13110 4549, 12420 5210 13006 &  \\
   8 & $\geq$7 & 0 14132 6404 8841 7620 6707 1150,     &  $1.4 \cdot 10^{-22}$ \\ 
& & 14939 8238 9174 9560 1677 4156 11112, & \\
 & & 11424 2037 7827 4640 11071 14007 6628,  & \\
 & &  13374 10684 2080 14648 1097 14383 1198, & \\
 & &  10966 15875 9746 9595 13007 4019 1354 & \\
  15 & $\geq$6 &  0 15439 10581 4136 503 11096 5590 8608 16006 8229 562 15423 14311 16137,   & $2 \cdot 10^{-38}$ \\ 
& & 5899 1875 8985 16334 15293 13429 5172 5303 9128 109 10068 1358 7752 6288,   &  \\
& &  13251 13386 11513 2438 443 15582 4641 2845 3509 12593 6608 14686 11470 15578,   &  \\
& &  8683 12489 444 8891 4727 12844 12383 5530 4478 9079 9226 5886 6790 8363 &  \\
\hline

\end{tabular}

\vspace{0.1cm} \caption{Table of bounds on $\Delta(2^{14},n)$ for the field defined by $1+\alpha+\alpha^{11}+\alpha^{12}+\alpha^{14} = 0$. 
}
\label{tab:bounds14}
\end{center}
\end{table}

\begin{example}
\label{ex-2}
According to Table~\ref{tab:bounds4}, for the finite field $GF(2^4)$ defined by $1+\alpha+\alpha^{4}= 0$, there exists a systematic code of rate 2/3 and with  CDP$=[2,3,4,5]$. An example of such a code is represented by  $(r_{1,2},r_{1,1}),(r_{2,2},r_{2,1}),(r_{3,2},r_{3,1}) = $ $(\alpha^0,\alpha^1),(\alpha^4,\alpha^0),(\alpha^1,\alpha^7) = $ $(1,\alpha),(\alpha^4,1),(\alpha,\alpha^7)$, and (implicitly)  $(r_{0,1},\ldots,r_{0,k}) = $$(1,\ldots,1)$. Thus the code has a polynomial parity check matrix 
\[H(x) = (1+\alpha x + x^2+ \alpha^7 x^3,1+x+\alpha^4 x^2+ \alpha x^3,1)\] and encoder/generator matrix  
\[
G(x) = \begin{pmatrix}
1 & 0 & 1+\alpha x + x^2+ \alpha^7 x^3 \\
0 & 1 & 1+x+\alpha^4 x^2+ \alpha x^3
 \end{pmatrix}
\]
Obviously, $G(x)H(x)^\top = (0,0)$. The absence of a $\geq$ symbol in the $\Delta$ column in Table~\ref{tab:bounds4} indicates that a complete search of all systematic MDS codes of rate 2/3 reveals that ${\cal D}=5$ is maximum. 
The $R$ column, as will be explained later, indicates that one in seventy random assignments of nonzero values for $(r_{1,1},r_{1,2}),(r_{2,1},r_{2,2}),(r_{3,1},r_{3,2})$ will give a code with the same CDP, \emph{i. e.} codes with these parameters are not very rare. A \emph{nonsystematic} code over $GF(2^4)$ with degree $\delta=2$ and CDP$=[2,3,4,5]$ was presented in \cite{StronglyMDS2006}.
\end{example}

\section{Upper bounds and code assessment}
\label{sec:bounds}

It would be useful to determine upper bounds on $\Delta(q^m,n)$ in order to assess how good the codes from random search are with respect to optimum. The Heller bound \cite{Heller68} relates convolutional codes with a given free distance ${\cal D}$ with its truncated block codes, and uses known bounds on block codes to determine convolutional code parameters that cannot be achieved. Unfortunately the Heller bound is of limited use in our case, since the truncated code will actually have a much lower minimum distance than $\cal D$ when viewed as a block code, and also since exact bounds on block codes in the range of parameters that we are interested in here are not well known. Moreover, the approach of sphere packing for binary codes \cite{RosnesYtrehus2004} cannot be easily adapted to the current case, since the structure of optimum nonbinary codes turns out to be quite different from that of optimum binary codes\footnote{Optimum binary convolutional codes tend to require parity check matrices with many $r_{1,j}=0$, whereas we have seen that in the nonbinary case, all degree one coefficients $r_{1,j}$ are nonzero. These differences impose different combinatorial constraints in the binary and the nonbinary case.}.

A simple bound is described in the next subsection. In Subsection~\ref{subsec:rareness} we present an alternative way of describing how great our codes are, through the concept of \emph{rareness}.

\subsection{A simple bound}
The following simple bound is tight for ${\cal D} \leq 4$.

\begin{theorem}
\label{th-upb-1}
For rate $(n-1)/n$ codes over $GF(q^m)$ with $CDP = [2,3,\dots,{\cal D}],$ $n-1\leq (q^m-1)/({\cal D}-2)$.
\end{theorem}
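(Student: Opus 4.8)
The plan is to look at the first two rows of the $k$-superregular matrix $H'^{(D)}$ from Lemma~\ref{lemma:biglemma} and extract a counting constraint on the first-degree coefficients $r_{1,j}$. By Lemma~\ref{lem:constantone} we may take $r_{0,j}=1$ for all $j$, so the top-left $2\times 2k$ block of $H'^{(D)}$ has first row $(1,\dots,1\mid 0,\dots,0)$ and second row $(r_{1,1},\dots,r_{1,k}\mid 1,\dots,1)$. Superregularity forces certain $2\times2$ proper minors to be nonsingular; in particular, picking column $s$ from the first $k$-block and column $t$ from the first $k$-block gives the minor $\left|\begin{smallmatrix}1&1\\ r_{1,s}&r_{1,t}\end{smallmatrix}\right|=r_{1,t}-r_{1,s}$, so all $r_{1,j}$ must be distinct (this is condition (ii) of Lemma~\ref{lemma-condition-D4} specialized to $i=1$), and picking column $s$ from the first block and column $t$ from the second block gives $\left|\begin{smallmatrix}1&0\\ r_{1,s}&1\end{smallmatrix}\right|\neq0$ trivially, while other $2\times2$ minors involving the second $k$-block and a column further right give $r_{1,s}\neq 0$. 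So far this only yields $k\le q^m-1$, which is the ${\cal D}=3$ case; I need to use the deeper structure for larger ${\cal D}$.

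For the general bound I would iterate this idea up the matrix. The key observation is that for each $i$ with $1\le i\le {\cal D}-2$, the row-$i$ coefficients $r_{i,1},\dots,r_{i,k}$ must, by the proper-minor conditions coupling row $i+1$ (with its $r_{0,\cdot}=1$ entries) to earlier rows, be nonzero and pairwise distinct — essentially condition (i) and (ii) of Lemma~\ref{lemma-condition-D4} generalize to every degree $i\le D$, because the relevant $2\times2$ proper submatrix built from row indices $\{i_1,i_2\}$ with $i_2-i_1=i$ and two columns in the same $k$-block has the form $\left|\begin{smallmatrix}r_{0,s}&r_{0,t}\\ r_{i,s}&r_{i,t}\end{smallmatrix}\right| = r_{i,t}-r_{i,s}$. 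That gives only $k\le q^m-1$ for each degree separately, not the $({\cal D}-2)$-fold improvement. To get the stated bound one instead needs a single simultaneous constraint: I would consider, for each pair $s<t$ of columns in the first $k$-block, the vector of ratios or differences $\big(r_{1,t}-r_{1,s},\ r_{2,t}-r_{2,s},\ \dots\big)$ and argue via a Vandermonde-type proper minor (the last type of $3\times3$ minor in Lemma~\ref{lemma-condition-D4}, and its higher analogues $\left|\begin{smallmatrix}1&\cdots&1\\ r_{1,\cdot}&\cdots&r_{1,\cdot}\\ \vdots&&\vdots\end{smallmatrix}\right|$) that no column index can be "reused" across the ${\cal D}-2$ degrees; equivalently, the $k$ points $(r_{1,j},r_{2,j},\dots,r_{{\cal D}-2,j})\in(\mathbb F^*)^{{\cal D}-2}$ must be such that no two lie on a common curve of the controlled degree. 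Counting the available slots then gives $k({\cal D}-2)\le q^m-1$, i.e. $n-1 = k \le (q^m-1)/({\cal D}-2)$.

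The cleanest route, and the one I would actually write, is probably to reduce directly to the classical fact that a systematic MDS convolutional code of rate $(n-1)/n$ and free distance ${\cal D}$ is equivalent (via the duality in the unnamed lemma after Definition~\ref{def:Delta}) to a systematic rate $1/n$ MDS code of free distance $(n-1)({\cal D}-1)+1$, and then invoke a generalized Singleton bound on the latter: a rate-$1/n$ convolutional code of degree $D = {\cal D}-2$ has free distance at most $n(D+1)$ when the field is large enough, but over $GF(q^m)$ the column-distance/Griesmer-type constraint forces $(n-1)({\cal D}-1)+1 \le$ (something linear in $q^m$), which rearranges to $n-1\le (q^m-1)/({\cal D}-2)$. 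The main obstacle I anticipate is making the column-reuse / simultaneous-distinctness argument fully rigorous: it is easy to see each degree contributes a set of $q^m-1$ "labels" and tempting to assert the labels used across different degrees are disjoint, but justifying that disjointness cleanly — rather than hand-waving from the pattern of Lemma~\ref{lemma-condition-D4} — is the delicate point, and I would want to phrase it as a statement about a single large Vandermonde-like proper minor of $H'^{(D)}$ being nonzero so that the $k\cdot({\cal D}-2)$ quantities $r_{i,s}^{\,?}$ (or appropriate ratios) are forced to be distinct nonzero elements of the field.
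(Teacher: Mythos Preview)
Your plan correctly handles the case ${\cal D}=3$ and correctly identifies the obstacle: showing that the $k$ labels used at each degree $i$ are not only distinct among themselves but also disjoint \emph{across} the ${\cal D}-2$ degrees. However, you have not found the mechanism that actually delivers this, and the two routes you sketch (large Vandermonde-type proper minors; duality plus a Griesmer-type bound) are, respectively, more elaborate than needed and not obviously workable as stated.

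The paper's argument needs only $2\times2$ proper minors, and the right invariants are the \emph{successive ratios} $\rho_{i,t}:=r_{i,t}/r_{i-1,t}$ for $1\le i\le {\cal D}-2$ and $1\le t\le k$ (with $r_{0,t}=1$). For any $i\ge j$ one can locate in $H'^{(D)}$ a proper $2\times2$ submatrix of the form
\[
\begin{pmatrix} r_{i-1,s} & r_{j-1,t}\\ r_{i,s} & r_{j,t}\end{pmatrix},
\]
obtained from two consecutive rows and two columns in blocks $a-i+1$ and $a-j+1$ for any $a$ with $i\le a\le D$. Its determinant is $r_{i-1,s}r_{j,t}-r_{i,s}r_{j-1,t}$, which is nonzero precisely when $\rho_{i,s}\neq\rho_{j,t}$. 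Taking $j=i$ shows the $\rho_{i,\cdot}$ are pairwise distinct for each fixed $i$; taking $j<i$ shows the sets $\{\rho_{i,\cdot}\}$ and $\{\rho_{j,\cdot}\}$ are disjoint. Hence the $k({\cal D}-2)$ ratios $\rho_{i,t}$ are distinct nonzero elements of $GF(q^m)$, giving $k({\cal D}-2)\le q^m-1$.

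What you were missing is precisely this choice of quantity: differences $r_{i,t}-r_{i,s}$ do not interact across degrees via $2\times2$ minors, but the ratios $r_{i,t}/r_{i-1,t}$ do, because the proper-minor condition allows the two columns to sit in different $k$-blocks (so that the row shifts differ). Once you spot this, no higher-order minors, no curves in $(\mathbb{F}^*)^{{\cal D}-2}$, and no appeal to duality are needed.
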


\begin{proof}

For ${\cal D} = 3$ the result follows from Proposition~\ref{thm:d3}. Assume that ${\cal D} = 4$. Recall that all coefficients are nonzero. Consider the $2 \times 2$ minors of type
\begin{equation}
\begin{vmatrix}
\label{}
1 & 1\\
r_{1,s} & r_{1,t}
 \end{vmatrix} = r_{1,s} + r_{1,t},
 \begin{vmatrix}
\label{}
r_{1,s} & r_{1,t}\\
r_{2,s} & r_{2,t}
 \end{vmatrix}=r_{1,s}r_{2,t}+r_{1,t}r_{2,s} , \mbox{ and }
  \begin{vmatrix}
\label{}
r_{1,s} &1\\
r_{2,s} & r_{1,t}
 \end{vmatrix} = r_{2,s}+r_{1,s}r_{1,t}.
\end{equation}

 From the conditions on the $2 \times 2$ proper minors, since all those minors have to be nonzero,  it follows that in order to have ${\cal D} > 3$, the values in the sets $\{r_{1,1},\ldots,r_{1,k}\}$ and $\{r_{2,1}/r_{1,1},\ldots,r_{2,k}/r_{1,k}\}$ must be $2k$ distinct values in $GF(q^m) \setminus \{0\}$.

Now consider a code with ${\cal D} > 4$. Then the minors

\begin{equation}
\begin{vmatrix}
\label{}
r_{2,s} & r_{2,t}\\
r_{3,s} & r_{3,t}
 \end{vmatrix} = r_{2,s}r_{3,t} + r_{2,t}r_{3,s},
 \begin{vmatrix}
\label{}
r_{2,s} & 1\\
r_{3,s} & r_{1,t}
 \end{vmatrix}=r_{2,s}r_{1,t}+r_{3,s}, \mbox{ and }
  \begin{vmatrix}
\label{}
r_{2,s} & r_{1,t}\\
r_{3,s} & r_{2,t}
 \end{vmatrix} = r_{2,s}r_{2,t}+r_{1,t}r_{3,s}.
\end{equation}

Again they all have to be nonzero, and this implies that the set $\{r_{3,1}/r_{2,1}, \ldots , r_{3,k}/r_{2,k}\}$ is a new set of $k$ different values, and they are all different from the values in the sets $\{r_{1,1},\ldots,r_{1,k}\}$ and $\{r_{2,1}/r_{1,1},\ldots,r_{2,k}/r_{1,k}\}$. So in order to have ${\cal D}\geq 5$ we need to have at least $3k$ different non zero elements in the field.

 Generalizing the argument, it follows  that all   $r_{i,t}/r_{i-1,t}$ for $1\leq i\leq {\cal D}-2, 1 \leq t \leq k$ are  distinct nonzero values.
\end{proof}


\subsection{Rareness}
\label{subsec:rareness}

In this section we address the probability that a randomly generated convolutional code over $GF(2^m)$ of rate $(n-1)/n$ will be an MDS code with CDP of $[2,\ldots,{\cal D}]$. By ``randomly generated'' code we will mean one generated by a random systematic encoder, where each coding coefficient $r_{i,j}$  is selected independently and uniformly in $GF(2^m)\setminus\{0\}$. We define this probability as the \emph{rareness} of the parameter pair $(n,{\cal D})$.

For small values of $n$ and ${\cal D}$, the exact value of the rareness can be determined as a by-product of a complete code search. Since for large parameters it quickly becomes intractable to determine the best codes, it also quickly turns difficult to compute exact results for rareness. However, it is possible to obtain estimates of rareness, as described below.

First assume that  a complete search is applied. This will determine the set  ${\cal G}(\rho^*,n,m)$ of distinct sequences $r_{1,k-1}, \ldots, r_{1,1},$ $r_{2,k}, \ldots, \rho^*$  over $GF(2^m)$  for which all proper submatrices in ${\cal M}_{\rho'}$, $\rho' \leq \rho^*$ are nonsingular. Thus the probability that a given randomly selected sequence corresponds to a path in the search tree that satisfies the conditions at depth  $\rho^*$ is
\[P_R(\rho^*,n,m) = \frac{|{\cal G}(\rho^*,n,m)|}{(2^m-1)^{|\rho^*|}}.\]

For $|\rho^*|>1$, define
\begin{equation}\label{eq:Prhocond}
  P_R(\rho^*,n,m |  \rho^*-1) = \frac{P_R(\rho^*,n,m)}{P_R(\rho^*-1,n,m)} = Avg(\frac{{|\cal L}_{\rho^*}|}{2^m-1})
\end{equation}
where $Avg()$ is the average computed over the complete search. $P_R(\rho^*,n,m |  \rho^*-1)$ is the average conditional probability that a random generator which satisfies depth $\rho^*-1$ in the search tree also satisfies depth $\rho^*$.
For large parameters we are not able to carry out a complete search. However, we can perform deep but incomplete searches, which also provide estimates of the conditional probabilities $P_R(\rho^*,n,m |  \rho^*-1)$ in (\ref{eq:Prhocond}) as . These estimates will be quite accurate especially for the first depths, and hence they can be changed together to obtain an estimate for $P_R(\rho^*,n,m)$. As long as there is a substantial number of different search tree paths leading to depth $\rho^*-1$, the estimate $P_R(\rho^*,n,m |  \rho^*-1)$ should be reasonably good.  Hence we can also estimate $P_R(\rho^*,n,m|  \rho^*-1))$ as

\[\tilde{P}_R(\rho^*,n,m |  \rho^*-1) ) = \tilde{Avg}(\frac{{|\cal L}_{\rho^*}|}{2^m-1}) \]
where $\tilde{Avg}()$ is the (weighted) average computed over the incomplete search, and we can then estimate $P_R(\rho^*,n,m)$ as
\[\tilde{P}_R(\rho^*,n,m  ) = \prod_{\rho=(1,k-1)}^{\rho^*}\tilde{P}_R(\rho-1,n,m|  \rho-1)\]
where for $\rho = (1,k-1)$, $\tilde{P}_R(\rho-1,n,m|  \rho-1)=1$.

In Tables~\ref{tab:bounds3}--\ref{tab:bounds14}, we include the exact rareness in cases where we can perform a complete search, and otherwise we include the estimate. We concede that this approach is not foolproof. For example,  the construction in Proposition~\ref{thm:d4}, is unique at least for field sizes up to 32. For other choices for the first layer of coefficients $r_{1,1},\ldots,r_{1,2^{m-1}}$ than indicated in the proof of  Proposition~\ref{thm:d4}, it appears that the search tree ends up being considerably shallower. The rareness of the construction in Proposition~\ref{thm:d4}, \emph{i. e.} the probability that a random sequence will match  that construction exactly, is $2^{m-1}  \cdot (2^{m-1}-1)!/(2^m-1)^{2^m-3}$. Already for $m=5$ the rareness is about $10^{-30}$, for $m=8$ less than  $10^{-393}$. Hence, if for an arbitrary set of search parameters  there exists a very rare construction that is not caught by the incomplete search, the estimates for the deepest values of $\rho^*$ may be unprecise. However, we do believe that our estimates of $P_R(\rho^*,n,m)$  provide some intuition about the difficulty of reaching a certain depth in the search tree with a random path, and in the cases where we are able to carry out a complete search, we also note that the estimates as described here are pretty accurate with a modest non-exhaustive search effort.

Figure~\ref{fig-logprobabilities64} contains exact values (for $n=2,3$) and estimates (for $n=4,7$) of $P_R(\rho^*,n,6)$. Please see the figure caption for explanations.  We have also include rareness estimates in Tables~\ref{tab:bounds3}--\ref{tab:bounds14}.


\begin{figure}
  \centering
  \includegraphics[width=18cm]{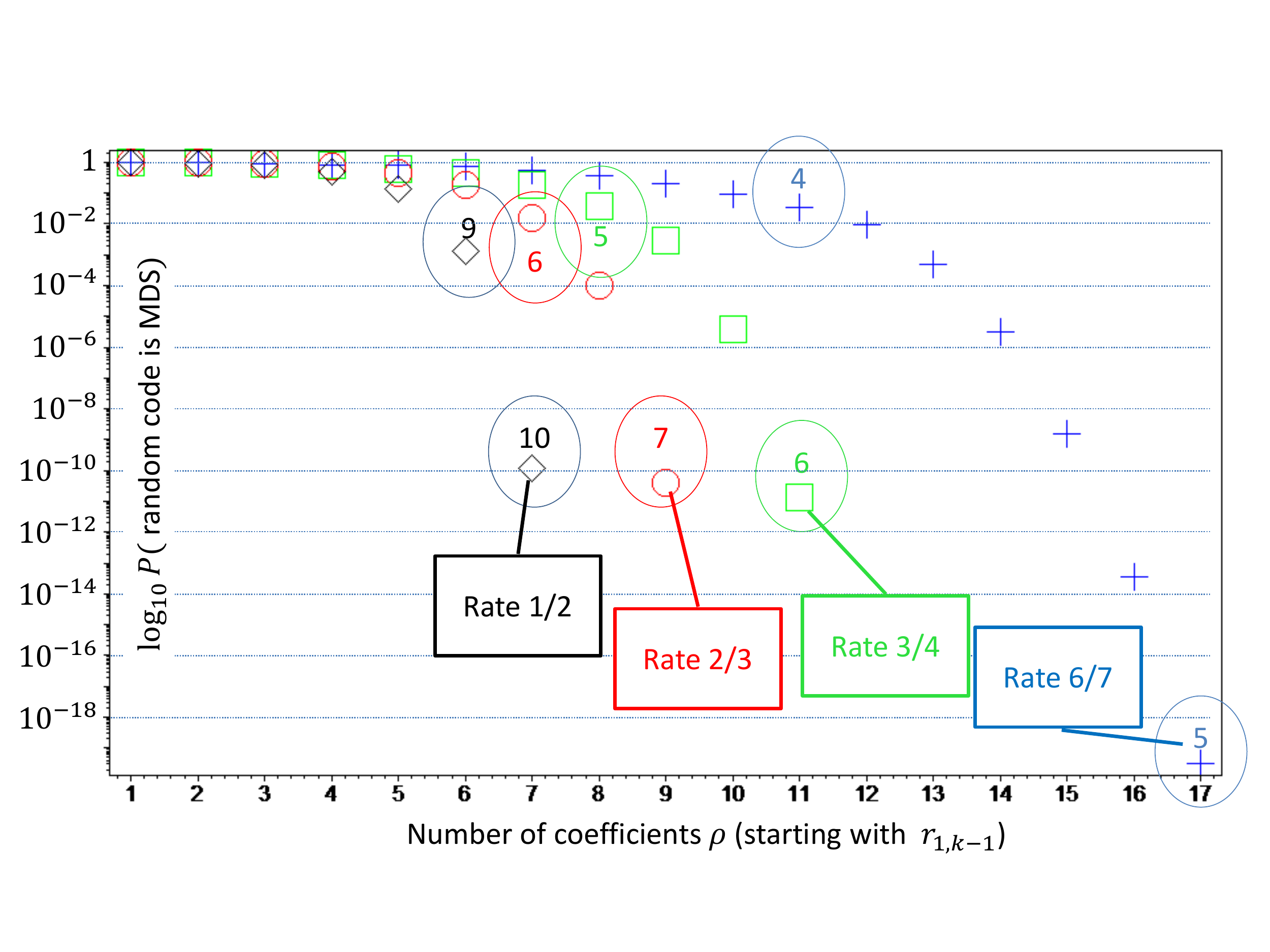}
  \vspace{-1cm}
  \caption{Rareness $P_R(\rho,n,6)$ of codes for $GF(64)$ for $n \in \{2,3,4,7\}$: Exact rareness $P_R(\rho,n,6)$ for $\rho \leq 7$, estimates $\tilde{P}_R(\rho,n,6)$ for $n>7$. In the figure, the search depth $\rho$ is measured in terms of number of coefficients. In order to construct a rate 6/7 encoder of distance  ${\cal D}=5$, it is necessary to find a sequence of 17 coefficients $r_{1,5}, \ldots, r_{1,1},r_{2,6},\ldots,r_{3,1}$. To get an encoder with distance ${\cal D}=4$, it suffices with 11 coefficients. Similar for the other cases.  }\label{fig-logprobabilities64}
\end{figure}

\section{Conclusion and open problems}

Motivated by the practical problem of fast recovery of a coded packet-erasure channel, we have studied systematic MDS convolutional codes over $GF(2^m)$. We have characterized them in terms of $k$-superregularity of a certain matrix. We have presented new optimum constructions for free distances ${\cal D}\leq 4$, tables of new codes found by computer search, and a combinatorial upper bound which is tight in the case of small free distances. In order to assess how ``good'' a code is, we have also introduced the concept of \emph{rareness}.

It would be interesting to establish upper bounds that are tight also for larger free distances. Another issue would be to study whether there exist general algebraic constructions, similar to the one in  Proposition~\ref{thm:d4}, for systematic MDS codes of free distance ${\cal D}\geq 5$.

It would also be of some theoretical interest to optimize the CDP of \emph{strongly}-MDS codes over $GF(2^m)$ under an additional constraint on the degree $\delta$ of their minimal encoders. We have not considered this problem since the complexity of Viterbi decoding of such codes is prohibitive for all but small values of the product $m \cdot \delta$ (and since it seems difficult). 


\section*{Appendix A : Proof of Lemma~\ref{lemma:biglemma}}

\begin{proof}

Before starting we will set some notations.

In $H^{(D)}$ for each $s=1,\ldots, (D+1)$ let $C_s$ be the set of column indices $C_s=\{(s-1)(k+1)+1, (s-1)(k+1)+2, \ldots, s(k+1)\}$.

Taking into account the way $H^{(D)}$ is constructed it is clear that for any $s\leq D$, the submatrix of $H^{(D)}$  formed by the first  $s+1$ rows and the columns in $C_1, \ldots, C_{s+1}$ is $H^{(s)}$ (the parity-check matrix of the $s$-th truncation).

Also the submatrix formed by the last $s+1$ rows and the columns in $C_{D-s+1}, \ldots, C_{D+1}$ is $H^{(s)}$.

For each set of column indices $C_s$ the last index is $s(k+1)$ and the corresponding column in $H^{(D)}$ is the $s$-th column of the identity $I_{D+1}$.

In an analogous way we will call $C'_s$ the set of column indices  $C'_s=\{(s-1)k+1, (s-1)k+2, \ldots, sk\}$ in the matrix $H'^{(D)}$.

In what follows we will use the same name for a square submatrix and for the corresponding minor since it will create no confusion.

Now we start the proof.

\begin{enumerate}

\item Assume that the CDP is $(d_0=2, d_1=3,\ldots ,d_D= D+2)$.

 In particular $d_0=2$ implies that all the entries $r_{0,j}$ for $j\in C_1$ are non zero.

  Let $M'$ be a proper minor of $H'^{(D)}$ of size $p\times p$ formed by the  entries of $H'^{(D)}$ in rows with indices $1\leq i'_1 < i'_2<\cdots < i'_p\leq s+1$ and columns with indices $1\leq j'_1< j'_2< \cdots < j'_p\leq k(D+1)$.

    Since $M'$ is proper we have $j'_l\leq ki'_l$ for $l=1,\ldots, p$.

    Let $F'=\{i'_1, \ldots, i'_p\}$ be the set of row indices in $M'$.
    From $M'$ we construct a $(D+1)\times (D+1)$ minor $M$ in $H^{(D)}$ by doing the following:

    \begin{itemize}
    \item The row indices are $\{1,\ldots , D+1\}$.
    \item For each $s \in \{1,\ldots, D+1\}$ we define the column index $j_s$ as follows:
    \begin{itemize}
    \item If $s\in F'$ then there exists a unique $l(s)\in \{1,\ldots , p\}$ such that $i'_{l(s)}=s$ (note that $l(s)\leq s$). Considering the corresponding column index in $M'$ we have $j'_{l(s)}=q_{l(s)}k+r_{l(s)}$ with $0\leq q_{l(s)}\leq D$ and $1\leq r_{l(s)}\leq k$ unique. (We note that $l$ is an increasing function of $s$ and also that $j'_{l(s)}\leq ki'_{l(s)}=ks$, which implies $q_{l(s)}<s$).

        Then define $j_s=q_{l(s)}(k+1)+r_{l(s)}$.
        Clearly $j'_{l(s)}\in C'_{q_{l(s)}+1}$ and $j_{l(s)}\in C_{q_{l(s)}+1}$ and actually the corresponding columns are identical.
        \item If $s\notin F'$ then $j_s=s(k+1)$, so the corresponding column is the last in the block with column indices $C_s$.

    \end{itemize}

    \end{itemize}

   Let us note that $1\leq j_1, \ldots, j_{D+1}\leq (D+1)(k+1)$ but those column indices are not guaranteed to be ordered in increasing order as the $j'_s$ were.

 The added columns will form a submatrix which is $I_{D+1-p}$ in the rows that were not in $F'$, and we have
 \[M=\left( \begin{array}{c|c}
 I_{D+1-p} & \star \\
 \hline
 0 & M' \end{array}\right)\]
 therefore the value of minor $M'$ is the same as the value of $M$ and in order to see that $H'{(D)}$ is $k$-superregular we just need to check that $M\neq 0$. We will proceed in a recursive way using that each $s$-th truncation will provide minimum distance $d_s=s+2$ for each $s\leq D$.

 \begin{itemize}

   \item $M$ has at least one column index in $C_1$.

   Proof: If there are no columns in $C_1$ it means that $1\in F'$ (otherwise $j_1=1(k+1)$, which is in $C_1$,  would be in $M$).
   $1\in F'$ implies $i'_1=1$, hence $l(1)=1$ and $j'_1\leq k i'_1=k$, that is, $j'_1=0\cdot k+r_1$, and $j_1=0(k+1)+r_1<(k+1)$. This means $j_1\in C_1$ which would contradict the assumption.

   \item If $M$ has exactly 1 column in $C_1$ then the other $D$ columns have indices in $C_2 \cup \cdots \cup C_{D+1}$ and all have 0 in the first position, so we have

       \[M=\left( \begin{array}{c|c}
       r_{0,j_1} & 0 \cdots 0 \\
       \hline
       * & M_{2\ldots D+1} \end{array}\right)\]
       where $M_{2\ldots D+1}$ is the submatrix of $M$ formed by the last $D$ rows and the last $D$ columns. Since $r_{0,j_1}\neq 0$ we have $M\neq 0$ if and only if $M_{2\ldots D+1}\neq 0$, and we can proceed working with $M_{2\ldots D+1}$ in $H^{(D-1)}$ in the same way.

   \item If $M$ has at least two columns in  $C_1$. Suppose that $s$ is the first index for which we have that at least two columns of $M$ are in $C_1$, at least 3 are in $C_1\cup C_2$, $\ldots$ , at least $s+1$ columns are in $C_1\cup C_2\cup \cdots \cup C_s$ but there are no $s+2$ columns in $C_1\cup C_2\cup \cdots \cup C_s\cup C_{s+1}$.

       This clearly implies that there are no column of $M$ in $C_{s+1}$.

       Now let us consider each $t\in \{1,\ldots , s+1\}$.
       \begin{itemize}
       \item If $t\in F'$ there exists $l(t)\leq t\leq s+1$ such that $i'_{l(t)}=t$.

       $j'_{l(t)}=q_{l(t)}k+r_{l(t)}\leq k i'_{l(t)}=kt$, therefore $q_{l(t)}\leq t-1$, and from here $j_t=q_{l(t)}(k+1)+r_{l(t)}\leq (t-1)(k+1)+r_{l(t)}$. So $j_t\in C_1 \cup C_2 \cup \cdots \cup C_t\subseteq C_1 \cup C_2 \cup \cdots \cup C_{s+1}$, but it cannot be in $C_{s+1}$, then $j_t\in C_1 \cup C_2 \cup \cdots \cup C_s$.

       \item If $t\notin F'$. Note that in this case $t\leq s$ since $s+1 \notin F'$  implies column $(s+1)(k+1)$ is in $M$ and in $C_{s+1}$, contradicting that there were no columns in $C_{s+1}$

           Then $j_t=t(k+1)\in C_1 \cup C_2 \cup \cdots \cup C_s$.

       \end{itemize}

       We have proven that even though indices $j_1, \ldots j_{s+1}$ are not ordered in increasing order, we have that they are all in $C_1 \cup C_2 \cup \cdots \cup C_s$.

       On the other hand, index $j_{s+2} \notin C_1 \cup C_2 \cup \cdots \cup C_{s+1}$.



  Hence, $M$ can be decomposed as
  \[M=\left( \begin{array} {c|c}
  M_{1\ldots s+1} & 0 \\
  \hline
  * & M_{s+2\ldots D+1}\end{array}\right)\]

  where  $M_{1\ldots s+1}$ is the part of $M$ corresponding to the first $s+1$ rows and columns and we have proven it is contained in the submatrix of $H^{(D)}$ formed by the first $s+1$ rows and the first $(s+1)(k+1)$ columns, which actually is $H^{(s)}$ and it is guaranteed to be non zero because $d_s=s+2$ and the minor satisfies the condition that it has at least 2 columns among the first $k+1$ columns of $H^{(s)}$, at least three among the first $2(k+1)$, $\ldots$, and at least $s+1$ among the first $s(k+1)$.

  Minor $M_{s+2\ldots D+1}$ is formed by the last $D-s$ rows and columns of $M$ and it is contained in the submatrix of $H^{(D)}$ formed by the last $D-s$ rows and the last $(D-s)(k+1)$ columns, which is $H^{(D-s-1)}$ and the same argument used so far can be used to  prove  that is is non zero by decomposing it further into blocks; each of them nonzero.

  Finally, we can note that $M$ will have at most one column index in $C_{D+1}$. Having at least two would imply that $M'$ has also at least two columns in $C'_{D+1}$  and this would contradict the condition of $M'$ being proper since $i'_{p-1}\leq D, j'_{p-1}\in C'_{D+1}$ implies $j'_{p-1}\geq kD+1>kD=k
  i'_{p-1}$.

    \end{itemize}

 \item Suppose now that $H'^{(D)}$ is $k$-superregular.

 Consider a minor $M$ of size $(D+1)$ in $H^{(D)}$ formed by the columns in positions $1\leq j_1<j_2<\cdots <j_{D+1}\leq (D+1)(k+1)$ and assume that $j_2\leq (k+1), j_3\leq 2(k+1), \ldots, j_{D+1}\leq D(k+1)$.

 We construct a minor $M'$ by removing from $M$ any column which is in position $s(k+1)$ and the corresponding row $s$. As before, it is clear that
  \[M=\left( \begin{array}{c|c}
 I_{D+1-p} & \star \\
 \hline
 0 & M' \end{array}\right)\]
 where $D+1-p$ is the number of removed columns and $p$ is the size of the remaining minor $M'$.

 With a careful analysis similar to the one done in the reciprocal part of the proof, one can prove that $M'$ is a proper minor in $H'{(D)}$ and hence non zero. For this we will continue using the same notations as in the demonstration of the reverse.

Consider that the rows remaining in $M'$ are $1\leq i'_1<i'_2<\cdots i'_p=D+1$. We call this set of indices $F'$ as before. The other rows correspond to the identity columns that have been suppressed.

The corresponding column indices in $M'$ are $1\leq j'_1< j'_2 < \cdots <j_{D+1}$, and each of those columns $j'_t$ is a copy of a column $j_{f(t)}$ in $M$ and it is clear that $j_{f(t)}\in C_{b(t)}$ for some $b(t)\leq D+1$, implies $j'_t\in C'_{b(t)}$. Using the same notations as in the other part of the proof, if $j_{f(t)}=q_{f(t)}(k+1)+r_{f(t)}$ with $0\leq q_{f(t)}\leq D$ and $1\leq r_{f(t)}\leq k$, then $j_t=q_{f(t)}k+r_{f(t)}$, so they will be in the same block of column indices;  $b(t)=q_{f(t)}+1$. Note that $r(t)\leq k$ since column indices that are multiples of $k+1$ will be removed and will never turn into columns in $M'$.

\begin{itemize}

\item First we observe that $i_p=D+1$ because there were no columns of $M$ in the block with indices in $C_{D+1}$, hence the last column of $I_{D+1}$ cannot be removed (it was never there) and row $D+1$ remains in $F'$.

The corresponding column  $j'_p$ will be a  copy of some column $j_{f(p)}\leq j_{D+1}\in C_1\cup \cdots \cup C_D$, so $j'_p\in C'_1\cup \cdots \cup C'_D$, hence $j'_p\leq Dk<(D+1)k\leq i'_p k$. So the proper condition is satisfied for the last index.

\item In general when we consider row index in position $p-s$ we have $i'_{p-s}=D+1-s-r$ where $r$ is the number of identity columns after the $D+1-s$ that have been removed.

Column $j'_{p-s}$ is copy of column $j_{f(p-s)}$ and $f(p-s)\leq D+1-s-r$ ($s$ columns after it have been already considered and $r$ have been removed. From here we have $j_{f(p-s)}\leq j_{D+1-s-r}\in C_1\cup \cdots \cup C_{D-s-r}$ and this implies $j'_{p-s}\leq (D-s-r)k<i'_{p-s}k$.

\item A final observation is that $j'_1$ is always in block $C'_1$ (because block $C_1$ contained at least two columns of $M$, so even is one is removed there will always be at least one column remaining in that first block. On the other hand $i'_1\geq 1$ and we have $j'_1\leq k\leq ki'_1$.

    \end{itemize}

The first and last observations are not necessary but they help to understand the general case.

We have proven that minor $M'$ is proper and therefore cannot be singular.

\end{enumerate}

\end{proof}

\end{document}